%% file: main.tex
\documentclass[letterpaper, 10 pt, conference]{ieeeconf} 
\IEEEoverridecommandlockouts
\usepackage{amsmath,amssymb}
\usepackage{etoolbox}
\usepackage{graphicx}
\usepackage{bm}
\usepackage{svg}
\usepackage{textcomp}
\usepackage{xcolor}
\usepackage{cite}
\usepackage{url}
\usepackage[hidelinks]{hyperref}
\usepackage{algorithm}
\usepackage[noend]{algorithmic}

\usepackage{amsthm}
\input{settings}

\usepackage{xprintlen}
\usepackage{cases}
\usepackage{tabularx,ragged2e}
\usepackage{booktabs}
\usepackage{nccmath}
\usepackage{caption}
\usepackage{float}
\usepackage{subcaption}
\usepackage{mathtools}
\usepackage{tikz}
\usepackage{caption}  
\usetikzlibrary{matrix, arrows.meta}
\usetikzlibrary{automata, positioning}

\def\BibTeX{{\rm B\kern-.05em{\sc i\kern-.025em b}\kern-.08em
    T\kern-.1667em\lower.7ex\hbox{E}\kern-.125emX}}
\usepackage{array}
\newcolumntype{P}[1]{>{\centering\arraybackslash}p{#1}}

\begin{document}

\title{A Reactive Redistribution Mechanism for STL Tasks in Multi-Agent Systems Under Time-Varying Communication}

\author{Gregorio Marchesini, Bjarne Jan Jesse Moro, Siyuan Liu, 
Lars Lindemann and Dimos V. Dimarogonas
\thanks{This work was supported in part by the Horizon Europe EIC project SymAware (101070802), ERC LEAFHOUND Project, Swedish Research Council (VR), and the Knut and Alice Wallenberg (KAW) Foundation.}
\thanks{G. Marchesini, B. J. J. Moro, and D. V. Dimarogonas are with the Division of Decision and Control Systems, KTH Royal Institute of Technology, Stockholm, Sweden.
	E-mail: {\tt\small \{gremar,bjjmoro,dimos\}@kth.se}. S. Liu is with the Department of Electrical Engineering, Eindhoven University of Technology, the Netherlands. 
    E-mail: {\tt s.liu5@tue.nl}.  
    L. Lindemann is with the Automatic Control Laboratory, ETH Zürich, Zürich, Switzerland.
	E-mail: {\tt llindemann@ethz.ch}.  
}
}

\maketitle
\begin{abstract}
\input{content/abstract}
\end{abstract}


\section{Introduction}
\input{content/introduction}\par

\input{content/notation}
\section{Preliminaries}\label{sec:preliminaries}
\input{content/preliminaries}

\section{Problem Formulation}\label{sec:problem}
\input{content/problem}

\section{Graph Transition System}
\input{content/gts}\label{sec:gts}
\section{Task Decomposition over GTS}\label{sec:decomposition}
\input{content/task_dec}

\section{Overall Algorithmic Implementation}\label{sec:implementation}

\input{content/implementation}

\section{Numerical Experiments}\label{sec:numerical}
\input{content/numerical_studies}
\section{Conclusion}\label{sec:conclusion}
\input{content/conclusion}
\vspace{-0.25cm}
\appendix
\input{content/appendix}


\bibliographystyle{IEEEtran}
\bibliography{references} 

\end{document}

%% file: settings.tex
\renewcommand{\vec}[1]{\bm{#1}}
\newcommand{\true}{\top}

\newtheoremstyle{compact}
{0pt}                
{0pt}                
{\itshape}           
{}                   
{\bfseries}          
{.}                  
{.5em}               
{}                   

\theoremstyle{compact} 

\newtheorem{proposition}{\bf{Proposition}}
\newtheorem{definition}{\bf{Definition}}
\newtheorem{lemma}{\bf{Lemma}}
\newtheorem{theorem}{\bf{Theorem}}
\newtheorem{remark}{\bf{Remark}}
\newtheorem{assumption}{\bf{Assumption}}

\newtheorem{corollary}{\bf{Corollary}}
\newtheorem{example}{\textbf{Example}}

\def\myunderbar#1{\underline{\sbox\tw@{$#1$}\dp\tw@\z@\box\tw@}}

%% file: content/abstract.tex
We present a communication-aware task decomposition framework for multi-agent systems with collaborative relative configuration objectives specified in Signal Temporal Logic (STL), allowing for dynamic task reallocation under time-varying communication networks. Building on our prior work, the framework supports the direct use of existing feedback controllers for reactive task satisfaction. We address two key challenges: disjunctive STL specifications and time-varying communication networks. Disjunctive specifications are handled through a graph transition system that captures the alternative task sequences induced by logical OR operators. To address time-varying connectivity, we introduce a redistribution mechanism that transfers tasks from disconnected agents to connected ones as the network evolves while preserving decentralized execution. Simulations and experiments on a swarm of Crazyflie drones demonstrate scalability in the number of agents, communication connectivity, and specification complexity.

%% file: content/introduction.tex
A broad class of multi-agent system (MAS) applications requires enforcing prescribed configurations (formations) among agents, such as drone light shows \cite{droneshow} or search and rescue missions \cite{searchandrescue}. Classical MAS control has been extensively studied under static specifications; however, emerging applications increasingly require the satisfaction of time-varying collaborative objectives involving explicit temporal constraints. Signal Temporal Logic (STL) \cite{maler2004monitoring} has become a standard formalism for specifying such spatio-temporal behaviors, as it allows for both formal verification and controller synthesis within a unified framework. Existing methods for enforcing STL specifications in MAS can be broadly categorized into open-loop and feedback-based approaches. Open-loop approaches translate STL specifications into mixed-integer linear constraints, yielding trajectory synthesis problems that can be solved via mixed-integer programming \cite{planningmulti,buyukkoccak2021distributed}. While these methods are sound and complete, their computational complexity scales poorly with the number of agents and specifications, generally restricting their applicability to small-scale systems. In contrast, feedback-based approaches address scalability limitations by embedding STL constraints into differentiable, time-varying inequality constraints, commonly via Control Barrier Functions \cite{lindemann2025formal} or Prescribed Performance Funnels \cite{long2024dynamic}. This reformulation enables the synthesis of feedback controllers either analytically or by solving convex quadratic programs. Compared to open-loop approaches, these methods improve scalability with the number of agents, exhibit inherent robustness due to their reactive nature, and handle a rich fragment of STL specifications, at the expense of formal completeness, i.e., they may fail to find a solution even when one exists.

\begin{figure}
    \centering
    \includegraphics[width=\linewidth]{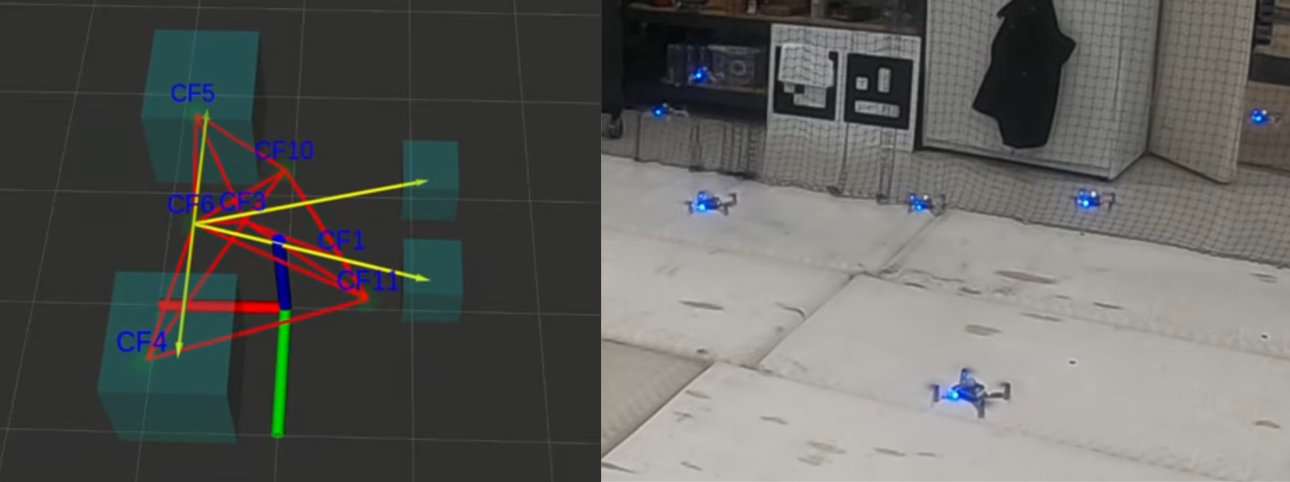}
    \caption{Crazyflie drones achieving a relative configuration task under time-varying communication. On the left, the edges of the communication graph are shown in red, while the required relative configurations are shown as cyan hyper-rectangles with yellow arrows representing their relative configuation. Full video available at \url{https://youtu.be/qLjHKBQRmYw}}
    \label{fig:main figure}
\end{figure}

A fundamental yet often implicit assumption in feedback-based multi-agent STL control is the compatibility of task dependencies with the underlying communication topology. Specifically, when a specification couples multiple agents, those agents must be able to exchange state information to enforce the associated constraints. In practical settings, however, communication is typically limited by distance-based constraints, and therefore, task-coupled agents may not be directly connected. This mismatch renders certain specifications unrealizable under purely local feedback laws and has been typically addressed by two main approaches: distributed state estimation \cite{zaccherini2026STL} and task decomposition \cite{me}. Among these methods, task decomposition reformulates tasks defined over non-communicating agents into a set of specifications over subsets of agents connected in the communication graph that imply the original task.

In this work, we propose a task decomposition framework, building upon our results in \cite{me}, to satisfy spatio-temporal relative configuration constraints, expressed as STL formulae, under time-varying communication graphs. Our contributions compared to \cite{me} are as follows: first, we consider task decomposition over disjunctive type specifications, thereby enlarging the class of specifications that can be addressed. Second, we consider \emph{dynamic} communication graphs among the agents, which naturally arise in real-world practical scenarios. Third, we provide a scalability analysis with respect to the number of agents, the complexity of the STL task, and the connectivity of the communication graph. Finally, simulation studies are complemented by experimental validation on a swarm of Crazyflie drones (see Fig. \ref{fig:main figure})

\subsubsection*{Organization} Sec.~\ref{sec:preliminaries} introduces the required preliminaries on STL for the specification of collaborative relative spatiotemporal tasks. Sec.~\ref{sec:problem} presents the problem formulation. Secs.~\ref{sec:gts} and~\ref{sec:decomposition} develop the graph transition system and task decomposition as the main ingredients for satisfying relative STL tasks under time-varying range-limited communication. The overall integrated algorithmic framework is presented in Sec.~\ref{sec:implementation}, while experimental validation and conclusions are provided in Sec.~\ref{sec:numerical} and~\ref{sec:conclusion}, respectively.

\subsubsection*{Notation} Real and non-negative real numbers are indicated as $\mathbb{R}$ and $\mathbb{R}_{\geq 0}$, respectively. Column vectors are indicated in bold, and $\vec{x}[k]$ is the $k$-th entry of $\vec{x}$. The matrix $\text{Id}_{n} \in \mathbb{R}^{n \times n}$ is the identity matrix of size $n$. For $\vec{x},\vec{y} \in \mathbb{R}^n$, the Hadamard product is $\vec{x}\odot \vec{y} = [\vec{x}[1]\cdot \vec{y}[1], \ldots \vec{x}[n]\cdot \vec{y}[n]]^T$. For a vector $\vec{x}$, $\|\vec{x}\|$ is the standard 2-norm. The notation $|\mathcal{C}|$ denotes the cardinality of the set $\mathcal{C}$, while $\mathcal{A}\oplus \mathcal{B} = \{\vec{a} + \vec{b} \mid \forall \vec{a} \in \mathcal{A}, \vec{b} \in \mathcal{B}\}$ is the Minkowski sum of $\mathcal{A} \subseteq \mathbb{R}^{n}$ and $\mathcal{B}\subseteq \mathbb{R}^{n}$. The set $2^{\mathcal{A}}$ is the power set of $\mathcal{A}$, i.e., the set of all subsets of $\mathcal{A}$. For vectors $\vec{a}_i \in \mathbb{R}^{n_i}$ with index $i$ in the set $\mathcal{K}$, the vector $\vec{a} = [\vec{a}_i]_{i\in \mathcal{K}} \in \mathbb{R}^{\sum_{i\in \mathcal{K}}n_i}$ is the stacked column vector of all vectors $\vec{a}_i$.

%% file: content/preliminaries.tex
Consider a set of agents with indices in $\mathcal{V} = \{1, \ldots N\}$, where each agent follows a nonlinear input-affine dynamics 
\begin{equation}\label{eq:main dynamics}
\begin{aligned}
\dot{\vec{x}}_i(t) &= f_i(\vec{x}_i(t)) + g_i(\vec{x}_i(t))\vec{u}_i(t),\\
\end{aligned}
\end{equation}
such that $\vec{x}_i \in \mathbb{R}^{n_x}$ and  $\vec{u}_i \in \mathbb{U} \subset \mathbb{R}^{n_u}$ are the state and control input of agent $i$, respectively. We consider $\mathbb{U} \subset \mathbb{R}^{n_u}$ to be compact, while $f_i : \mathbb{R}^{n_x} \rightarrow \mathbb{R}^{n_x}$ and $g_i : \mathbb{R}^{n_x} \rightarrow \mathbb{R}^{n_u}$ are locally Lipschitz continuous in their arguments. Given the system's state $\vec{x} = [\vec{x}_i]_{i \in \mathcal{V}} \in \mathbb{R}^{N n_x}$, and assuming each agent is subject to a measurable input signal $\vec{u}_i : [t_0,t_1] \rightarrow \mathbb{U}$, starting from the initial state $\vec{x}_0$, we denote by  $\vec{x}(t) = q_{\text{sol}}(t;\vec{x}_0, \{\vec{u}_i\}_{i\in \mathcal{V}}),\; \forall t\in [t_0,t_1]$ the unique maximal state solution\footnote{Without loss of generality we assume in this work that all solutions to \eqref{eq:main dynamics} under a measurable and bounded input signal $\vec{u}_i : [t_0,t_1] \rightarrow \mathbb{U}$ are defined over the whole interval $[t_0,t_1]$.} to \eqref{eq:main dynamics}, where $\vec{x}(t_0) = \vec{x}_0$.

We define the \textit{relative state} between agents $i,j \in \mathcal{V}$ as $\vec{e}_{ij} := \vec{x}_j - \vec{x}_i$, $\vec{e}_{ij} \in \mathbb{R}^{n_x}$. Moreover, $\vec{p}_i = S\vec{x}_i$ is the position of agent $i$, where $S$ is an appropriate selection matrix\footnote{For given input size $n$ and output size $m$, a selection matrix $S \in \{0,1\}^{n\times m}$ satisfies $\sum_{j=1}^n S[i,j] = 1$ for all $i$, such that in $\vec{y} = S\vec{x}$, the vector $\vec{y}$ represents a selection of the elements of $\vec{x}$.}, such that $\vec{p}_{ij} := \vec{p}_j -\vec{p}_i = S \vec{e}_{ij}$ is the relative position of $i$ and $j$. A \textit{static} \textit{undirected} graph $\mathcal{G}(\mathcal{V},\mathcal{E}) \in \Gamma$ is a tuple of the agents' indices $\mathcal{V}$ (vertices of the graph) and a set of edges $\mathcal{E} \subseteq \mathcal{V} \times \mathcal{V}$, such that $(i,j) \in \mathcal{E} \Leftrightarrow (j,i) \in \mathcal{E}$. The set $\Gamma$ is the set of undirected graphs on $\mathcal{V}$. The set of \textit{neighbours} of agent $i$ is defined as $\mathcal{N}_i := \{j \in \mathcal{V} \mid (i,j) \in \mathcal{E}\}$. When $\mathcal{E}$ is clear or irrelevant, we adopt the shorthand notation $\mathcal{G} := \mathcal{G}(\mathcal{V}, \mathcal{E})$, and we write $(i,j) \in \mathcal{G}$ for $(i,j) \in \mathcal{E}$.

A directed path $\vec{\pi}_{ij}$ between agents $i$ and $j$ in $\mathcal{G}$ is defined as a sequence of $L$ non-repeated indices in $\mathcal{V}$ such that $
\vec{\pi}_{ij} := k_1 \; k_2\; k_3\; \ldots k_L$. Namely, letting $\vec{\pi}_{ij}[k] \in \mathcal{V}$ indicate the $k$-th element of $\vec{\pi}_{ij}$ we have  $\vec{\pi}_{ij}[1] = i$, $\vec{\pi}_{ij}[L] = j$, $(\vec{\pi}_{ij}[k],\vec{\pi}_{ij}[k+1]) \in \mathcal{E}$. The function $\epsilon : \vec{\pi}_{ij} \mapsto \{(\vec{\pi}_{ij}[k],\vec{\pi}_{ij}[k+1])\}_{k=1}^{L-1}$ associates to each path $\vec{\pi}_{ij}$ the set of consecutive edges along the path. Similarly, a \textit{cycle} $\vec{\pi}_{\omega}$ is defined as a sequence of non-repeated indices, except the initial and final nodes, i.e., $\vec{\pi}_{\omega}[1] = \vec{\pi}_{\omega}[L]$.

Beyond static graphs, we define a \textit{dynamic} graph as a function $\mathcal{G} : \mathbb{R}_{\geq 0} \rightarrow \Gamma$, $t \mapsto \mathcal{G}(\mathcal{V},\mathcal{E}(t))$, where $\mathcal{E}: \mathbb{R}_{\geq 0} \rightarrow 2^{\mathcal{V} \times \mathcal{V}}$ is a function of time. In particular, we define the dynamic \textit{communication graph} $\mathcal{G}_c( \mathcal{V}, \mathcal{E}_c(t))$ with time-varying edges $\mathcal{E}_c(t) = \{ (i,j) \in \mathcal{V}\times \mathcal{V} \mid \| \vec{p}_{ij}(t) \| \leq r \}$ and neighbouring set $\mathcal{N}_{i,c}(t)$ for each $i\in \mathcal{V}$.

\subsection{Relative Configurations and Signal Temporal Logic}
For a \textit{desired relative vector} $\vec{c}_{ij} \in \mathbb{R}^{n_x}$ and a given \textit{size vector} $\vec{\alpha}_{ij} \in \mathbb{R}^{n_x}_{\geq 0}$, we defined a relative configuration $\mathcal{H}_{ij}(\vec{\eta}_{ij}) \subset \mathbb{R}^{n_x}$ between $i$ and $j$, with $\vec{\eta}_{ij} = [\vec{c}_{ij}^T, \vec{\alpha}_{ij}^T]^T \in \mathbb{R}^{n_x}\times \mathbb{R}^{n_x}_{\geq 0}$, as a shifted hyper-rectangle:
\begin{equation}\label{eq:predicate level set}
\begin{aligned}
&\mathcal{H}_{ij}(\vec{\eta}_{ij}) :=  
\{ \vec{e}_{ij}  \mid  |\vec{e}_{ij}[k] - \vec{c}_{ij}[k] | \leq \vec{\alpha}_{ij}[k],\\ &\; \forall k = 1, \ldots n_{x}\} = 
\{ \vec{e}_{ij}  \mid  H(\vec{e}_{ij} - \vec{c}_{ij}) - M\vec{\alpha}_{ij} \leq \vec{0}\}
\end{aligned}
\end{equation}
where $H = [\text{Id}_{n_x}, -\text{Id}_{n_x}]^T$, $M = [\text{Id}_{n_x},\text{Id}_{n_x}]^T$. 
Notably, hyper-rectangles are closed under Minkowski summation, i.e., the Minkowski sums of hyper-rectangles are themselves hyper-rectangles. In particular, given a graph $\mathcal G$ and a path of agents $\vec{\pi}_{ij}$ in $\mathcal{G}$, where each pair $(r,s)\in \epsilon(\vec{\pi}_{ij})$ is subject to  $\vec{e}_{rs} \in \mathcal{H}_{rs}(\vec{\eta}_{rs})$, then it holds \cite[Prop. 3]{me}
\begin{equation}\label{eq:minkowsky inclusion to linear constraints}
\bigoplus_{(r,s) \in \epsilon(\vec{\pi}_{ij})} \mathcal{H}_{rs}(\vec{\eta}_{rs}) = \mathcal{H}_{ij}\left(\sum_{(r,s) \in \epsilon(\vec{\pi}_{ij})} \vec{\eta}_{rs}\right).
\end{equation}

For a given relative configuration $\mathcal{H}_{ij}(\vec{\eta}_{ij})$ we define a \textit{boolean predicate} as 
\begin{equation}\label{eq:predicate definition}
\mu_{ij} = \mu_{ij}(\vec{e}_{ij}) :=
\left\{
\begin{array}{@{}l@{}}
\top \;\; \vec{e}_{ij} \in \mathcal{H}_{ij}(\vec{\eta}_{ij}),\\
\bot \;\; \text{otherwise}
\end{array}
\right. 
\end{equation}
and, for a subset $\mathcal{E} \subseteq \mathcal{V}\times \mathcal{V}$ of agents' pairs, with stacked relative states $\vec{e}_{\mathcal{E}} = [\vec{e}_{rs}]_{(r,s) \in \mathcal{E}}$, we define $\theta(\vec{e}_{\mathcal{E}}) = \bigwedge_{(r,s) \in \mathcal{E}} \mu_{rs}$, such that 
$\theta = \top \Leftrightarrow \vec{e}_{rs} \in \mathcal{H}_{rs}(\vec{\eta}_{rs}),\; \forall(r,s)\in\mathcal{E}$. In other words, $\theta$ encodes a Boolean satisfaction (True or False) of a set of relative configurations of the agents' pairs in $\mathcal{E}$. By this formalism, we can represent time-constrained relative configurations among agents using STL formulas as:
\begin{equation}\label{eq:working fragment}
 \varphi      := G_{I}\theta  \mid  F_{I}\theta,  \qquad \psi        := \varphi \mid \psi_1 \land \psi_2 \mid  \psi_1 \lor \psi_2,
\end{equation}
 where the operators $G_{I}$ and $F_{I}$ represent the \textit{always} and \textit{eventually} temporal operators over the time interval $I \subset \mathbb{R}_{\geq0}$, while the operators $\land$ and $\lor$ represent the logical conjunction and disjunction, respectively. We will denote by $\Psi$ the set of formulas in the fragment \eqref{eq:working fragment}. For notational convenience, we hereafter denote by $\psi^{\land}$ formulas that \textit{only} contain conjunctions as 
 $\psi^{\land} := \varphi_1 \land \varphi_2 \land \ldots = \bigwedge_{k} \varphi_k$, 
and we use the notation $\mu_{ij} \in \psi$ to denote that a predicate $\mu_{ij}$ is \textit{contained} in the nested formula $\psi$. For a signal $\vec{x} : \mathbb{R}_{\geq 0} \rightarrow \mathbb{R}^{N n_x}$ describing the state evolution of the multi-agent system under the dynamics in \eqref{eq:main dynamics}, the following semantic rules specify when each formula is satisfied 
\begin{equation}\label{eq:semantics}
\begin{aligned}
&\vec{x} \models F_I \theta
\Leftrightarrow \exists\tau\in I = [a,b]: \theta(\vec{e}_{\mathcal{E}}(\tau))= \top;\\
&\vec{x}\models G_I\theta
\Leftrightarrow \forall\tau\in I = [a,b]:\theta(\vec{e}_{\mathcal{E}}(\tau)) = \top;\\
&\vec{x}\models \bigcirc_k \varphi_k
\Leftrightarrow \bigcirc_k\,\vec{x}\models\varphi_k,
\quad \bigcirc\in\{\vee,\wedge\},
\end{aligned}
\end{equation}
where $\vec{x} \models \square$ denotes that the state signal $\vec{x} : \mathbb{R}_{\geq 0} \rightarrow \mathbb{R}^{Nn_x}$ satisfies a formula $\square$. While the intervals $I$ can potentially take any value, we make the following  assumption:
\begin{assumption}\label{ass:periodicity} Let $\mathcal{I} = \{ I_{p_i} \subset \mathbb{R}_{\geq 0} \}_{p_1=1}^{p_{n_p}}$, with $n_p \geq 1$, be an ordered collection of disjoint intervals $I_{p_i} = [a_{p_i}, b_{p_i}]$ such that $p_{i} < p_{j} \Rightarrow b_i < a_j$. Each task $\varphi$, as defined in \eqref{eq:working fragment}, is of the form $\varphi = T_{I_{p_i}}\,\theta$, where $T \in \{G, F\}$ and $I_{p_i} \in \mathcal{I}$. \end{assumption}
In practice, Assumption~\ref{ass:periodicity} implies that a formula $\psi$ consists of conjunctions/disjunctions of subformulas $\varphi$, sequentially ordered over the intervals in $\mathcal{I}$. While this may not always hold, it reflects standard engineering practice where a high-level planner schedules tasks over $\mathcal{I}$.
For a general task $\psi$, we define the \textit{task horizon} and \textit{task activation} as $t_{\text{hr}}(\psi) = \max_{p_i} b_{p_i}$ and $t_{\text{ac}}(\psi) = \min_{p_i} a_{p_i}$, such that, by the semantics in \eqref{eq:semantics}, values of $\vec{x}(t_{\text{ac}}(\psi) - \tau)$ and $\vec{x}(t_{\text{hr}}(\psi) + \tau)$, for any $\tau \geq 0$, do \textit{not} affect the satisfaction of $\psi$.

\subsection{Task Graphs}
To coherently capture both task and communication coupling among agents, we represent a task $\psi$ as a \textit{task graph}, which will enable direct comparison with $\mathcal{G}_c(t)$ and reveal structural mismatches that may hinder the satisfaction of $\psi$ via feedback, as analyzed shortly in Sec. \ref{sec:problem}. Specifically, let the graph mapping $\mathfrak{G} : \Psi \rightarrow \Gamma$, defined as
\begin{equation}\label{eq:task mapping}
\mathfrak{G}(\psi) = \mathcal{G}(\mathcal{V}, \mathcal{E}_{\psi}); \; \mathcal{E}_{\psi} = \{ (i,j) \in \mathcal{V}\times \mathcal{V} \mid \exists \mu_{ij} \in  \psi\}.
\end{equation}
In other words, $\mathfrak{G}(\psi)$ associates a task $\psi$ with a corresponding static graph $\mathcal{G}(\mathcal{V}, \mathcal{E}_{\psi})$, a \textit{task graph}, that defines the task couplings among agents. Practically, two agents are coupled in $\mathfrak{G}(\psi)$ if there exists a relative configuration predicate $\mu_{ij}$ defined over their relative state $\vec{e}_{ij}$ in $\psi$. We denote by $\mathfrak{E}(\psi)$ the set of edges of the task graph $\mathfrak{G}(\psi)$, and by $\mathfrak{N}_{i}(\psi)$ the set of neighbors of agent $i$ in $\mathfrak{G}(\psi)$. For a given agent $i$, we define the state $\vec{x}_{\mathfrak{N}_{i}(\psi^{\land})} := [\vec{x}_{j}]_{j \in \mathfrak{N}_{i}(\psi^{\land})}$, representing the stacked states of the neighbors of $i$ in the task graph $\mathfrak{G}(\psi^{\land})$.

Furthermore, for presentation, we introduce the notational equivalences: $\vec{x} \models \mathfrak{G}(\psi) \equiv \vec{x} \models \psi; \;  \vec{x} \models \mathfrak{G}(\psi_1)\mathfrak{G}(\psi_2) \ldots \equiv \vec{x} \models \psi_1 \land \psi_2 \land \ldots$, such that satisfying a \textit{sequence} of task graphs $\mathfrak{G}(\psi_1)\mathfrak{G}(\psi_2)\ldots$ corresponds to satisfy their conjunction.

%% file: content/problem.tex
Recent works \cite{Siyuan} studied decentralized feedback laws to satisfy conjunctions of tasks  $\psi^{\land}$ with the general form
\begin{equation}\label{eq:distributed law}
\vec{u}_{i}\langle \tau,\mathfrak{G}(\psi^{\land})\rangle = \kappa_i(\vec{x}_{\mathfrak{N}_{i}(\psi^{\land})}(t),t), \;\; \forall t \in [\tau, t_{\text{hr}}(\psi^{\land})]
\end{equation}
where $\kappa_i :  \mathbb{R}^{n_x \cdot |\mathfrak{N}_{i}(\psi^{\land})|} \times \mathbb{R}_{\geq 0} \rightarrow \mathbb{U}$ is a function that maps a time $t \in  [\tau, t_{\text{hr}}(\psi^{\land})]$ and the state of the neighboring agents 
$\vec{x}_{\mathfrak{N}_{i}(\psi^{\land})}$ in the task graph $\mathfrak{G}(\psi^{\land})$ to an input $\vec{u}_i$ for agent $i$ (we show shortly how we adopt \eqref{eq:distributed law} to satisfy tasks $\psi$ with disjunctions). The notation $\vec{u}_{i}\langle \tau, \mathfrak{G}(\psi^{\land})\rangle$ is used to denote that the feedback law in \eqref{eq:distributed law} is \textit{parametrized} in both the task graph $\mathfrak{G}(\psi^{\land})$ and the time $\tau \in [0,t_{\text{ac}}(\psi^{\land}))$ at which the feedback law starts to be applied to the system to satisfy $\psi^{\land}$. In practice,  $\vec{u}_{i}\langle \tau, \mathfrak{G}(\psi^{\land})\rangle: [\tau,t_{\text{hr}}(\psi^{\land})] \rightarrow\mathbb{U} $ is designed such that, starting from an initial state $\vec{x}(\tau)$, the state signal $\vec{x} : [\tau,t_{\text{hr}}(\psi^{\land})] \rightarrow \mathbb{R}^{Nn_x}$, $\vec{x}(t) = q_{\text{sol}}(t;\vec{x}(\tau),\{\vec{u}_{i}\langle \tau, \mathfrak{G}(\psi^{\land})\rangle\}_{i\in \mathcal{V}})$, satisfies the task $\psi^{\land}$, i.e., $\vec{x} \models \psi^{\land}$. Nevertheless, a key requirement to implement $\vec{u}_{i}\langle \tau, \mathfrak{G}(\psi^{\land})\rangle$ is that agent $i$ must access $\vec{x}_{\mathfrak{N}_{i}(\psi^{\land})}(t), \; \forall t\in [\tau, t_{\text{hr}}(\psi^{\land})]$, which demands $\mathfrak{N}_{i}(\psi^{\land}) \subseteq \mathcal{N}_{i,c}(t), \quad \forall t \in  [\tau, t_{\text{hr}}(\psi^{\land})],\; \forall i \in \mathcal{V}$, i.e., every neighbor in the task graph must be a neighbor in the communication graph until the time horizon $t_{\text{hr}}(\psi^{\land})$. By \textit{communication consistency} we formalize two situations for which this can not be guaranteed. 
\begin{definition}\label{def:communication consistency}
For some $t \ge 0$ and communication graph $\mathcal{G}_c(t)$, a predicate $\mu_{ij}$ with relative configuration $\mathcal{H}_{ij}(\vec{\eta}_{ij})$ is \textit{communication inconsistent} if either of the conditions
\begin{center}
\begin{tabular}{c c}
\begin{minipage}{3.0cm}
\begin{equation}\label{eq:edge inconsistent}
(i,j) \notin \mathcal{G}_c(t)
\end{equation}
\end{minipage} &
\begin{minipage}{4.2cm}
\begin{equation}\label{eq:geometric inconsistent}
\mathcal{H}_{ij}(\vec{\eta}_{ij}) \cap \mathcal{C}_{ij} = \emptyset
\end{equation}
\end{minipage}
\end{tabular}
\end{center}
hold, with $\mathcal{C}_{ij} = \{\vec{e}_{ij} \in \mathbb{R}^{n_x} \mid g^c(\vec{e}_{ij}) = \|\vec{p}_{ij}\| - r \leq 0\}$, where $r >0 $ is the communication radius. A task $\psi^{\land}$ (task graph $\mathfrak{G}(\psi^{\land})$) is communication inconsistent if it contains any such predicate (i.e., $\mu_{ij} \in \psi^{\land}$); otherwise, it is communication consistent.
\end{definition}
Informally, by Def.~\ref{def:communication consistency}, a task $\psi^{\land}$ is inconsistent at time $t$ if: 1) the agents involved in $\psi^{\land}$ are not in communication, as per \eqref{eq:edge inconsistent} (indeed, \eqref{eq:edge inconsistent} implies $\mathfrak{N}_{i}(\psi^{\land}) \not\subseteq \mathcal{N}_{i,c}(t)$ by definition of the neighbouring set), or 2) it contains a predicate $\mu_{ij}$ whose relative configuation $\mathcal{H}_{ij}(\vec{\eta}_{ij})$ is inherently infeasible with the maximum communication distance $r$. Indeed, the condition $\mathcal{H}_{ij}(\vec{\eta}_{ij}) \cap \mathcal{C}_{ij} = \emptyset$ implies that if $\mu_{ij} = \top$, i.e, $\vec{e}_{ij} \in \mathcal{H}_{ij}(\vec{\eta}_{ij})$, then necessarily $\|\vec{p}_{ij}\| > r$. Thus \eqref{eq:geometric inconsistent} implies that \eqref{eq:edge inconsistent} must hold at some time $\tau> t$, as $\mu_{ij}$ must take value $\true$ at some future time $\tau$ to satisfy $\psi^{\land}$.

Although communication consistency is time-dependent, we assume that if a task $\psi^{\land}$ is consistent at some time $t \leq t_{\text{ac}}(\psi^{\land})$, then there exists a controller that both satisfies the task and preserves consistency for all $\tau \in [t,\, t_{\text{hr}}(\psi^{\land})]$.

\begin{assumption}\label{ass:controller assamp}
    Consider a task $\psi^{\land}$, a time $\tau < t_{\text{ac}}(\psi^{\land})$, and an initial state $\vec{x}(\tau)$. Let $\psi^{\land}$ be communication consistent with $\mathcal{G}_c(\tau)$. Then there exists a control law $\vec{u}_{i}\langle \tau, \mathfrak{G}(\psi^{\land})\rangle: [\tau, t_{\text{hr}}(\psi^{\land})] \rightarrow \mathbb{U}$  as per \eqref{eq:distributed law} with corresponding solution $
    \vec{x}(t) = q_{\text{sol}}(t;\vec{x}(\tau),\{\vec{u}_{i}\langle \tau, \mathfrak{G}(\psi^{\land})\rangle\}_{i\in \mathcal{V}}), \quad \forall t \in [\tau, t_{\text{hr}}(\psi^{\land})],$
    such that $\mathfrak{G}(\psi^{\land}) \subseteq \mathcal{G}_c(t)$ for all $t \in [\tau, t_{\text{hr}}(\psi^{\land})]$, and $\vec{x} \models \psi^{\land}$.
\end{assumption}

Assmp.~\ref{ass:controller assamp} states that if $\mathfrak{G}(\psi^{\land})$ is communication consistent with $\mathcal{G}_c(\tau)$, then a controller exists that preserves the communication edges $\mathfrak{E}(\psi^{\land}) \subseteq \mathcal{E}_c(t)$ for all interval times $t\in [\tau, t_{hr}(\psi^{\land})]$, i.e., until the task is satisfied. Controllers with this property for the input affine dynamics in \eqref{eq:main dynamics} are available in prior work \cite{Siyuan,feedback1,feedback2,lindemann2025formal}, assuming $\mathfrak{G}(\psi^{\land})$ is communication consistent. Indeed, we highlight that: 1) communication maintenance could itself be represented as a relative configuration of the form \eqref{eq:predicate definition}, which can then be included within the STL task $\psi^{\land}$, 2) according to our definition, communication consistency of a configuration $\mathcal{H}_{ij}(\vec{\nu}_{ij})$ requires compatibility with the communication radius $r$ as per \eqref{eq:geometric inconsistent} in Def.~\ref{def:communication consistency}, such that the communication and relative configuration requirements within $\mathfrak{G}(\psi^{\land})$ are \textit{compatible}.

Motivated by this analysis, consider a MAS subject to a global STL task $\psi_{\text{glob}}$ as per \eqref{eq:working fragment} and connected through the dynamic communication graph $\mathcal{G}_c(t)$. We satisfy $\psi_{\text{glob}}$ under the communication constraints of the distributed law \eqref{eq:distributed law} via a two-step \textit{task decomposition} framework. First (Sec.~\ref{sec:gts}), we encode $\psi_{\text{glob}}$ as a \textit{graph transition system} (GTS), i.e., a transition system whose states are task graphs $\mathfrak{G}(\psi^{\land p_i})$, where $\psi^{\land p_i}$ is a conjunction of tasks over a common interval $I_{p_i}\in\mathcal{I}$ (Assmp.~\ref{ass:periodicity}). Satisfaction of the global task is then shown to be equivalent to satisfying a sequence of task graphs $\vec{x} \models \mathfrak{G}(\psi^{\land p_1}), \ldots, \mathfrak{G}(\psi^{\land p_{n_p}})$ along the GTS. In principle, this can be achieved by sequential application of controllers of the form \eqref{eq:distributed law}. However, individual graphs $\mathfrak{G}(\psi^{\land p_i})$ may violate communication consistency. Hence, in Sec.~\ref{sec:decomposition}, we decompose each $\mathfrak{G}(\psi^{\land p_i})$ reactively into a communication consistent task graph $\mathfrak{G}(\tilde{\psi}^{\land p_i})$ (Def.~\ref{def:communication consistency}) while preserving correctness ($\vec{x} \models \mathfrak{G}(\tilde{\psi}^{\land p_i})
\Rightarrow
\vec{x} \models \mathfrak{G}(\psi^{\land p_i})$), thus enabling the satisfaction of $\psi_{\text{glob}}$ via feedback.

%% file: content/gts.tex

In this section, we show how disjunctions in a global task $\psi_{\text{glob}}$ lead to alternative sequences of the task graphs that can satisfy $\psi_{\text{glob}}$. Namely, using the logical distributive/associative rules\footnote{$(a \land b)\lor c = (a\lor c)\land(b\lor c), \; (a \lor b)\land c = (a\land c)\lor(b\land c)$} for the $\land/\lor$ operators, we can rewrite $\psi_{\text{glob}}$ in Disjunctive Normal Form (DNF) as $\psi_{\text{glob}} = \bigvee_{k=1}^{K} \psi^{\land}_k $
for some $K \geq 1$, where the tasks $\psi^{\land}_k := \varphi_{1,k} \land \varphi_{2,k} \land \ldots$ \textit{only} contain conjunctions of tasks $\varphi$. Since each task $\varphi$ is defined over an interval $I_{p_i} \in \mathcal{I}$ (cf. Assmp. \ref{ass:periodicity}), we split each $\psi^{\land}_k$ in periods as
$
\psi^{\land}_k = \underbrace{( \varphi^{p_1}_{1,k} \land  \varphi^{p_1}_{2,k} \ldots)}_{\psi^{\land p_1}_k:=} \land  \underbrace{(\varphi^{p_2}_{1,k} \land  \varphi^{p_2}_{2,k} \ldots)}_{\psi^{\land p_2}_k:=} \land \ldots \underbrace{(\varphi^{p_{n_p}}_{1,k} \land  \varphi^{p_{n_p}}_{2,k} \ldots)}_{\psi^{\land p_{n_p}}_k:=}
$
where each task $\psi^{\land p_i}_k,\; \forall p_i = p_1, \ldots p_{n_p}$ is a conjunction of tasks $\varphi^{p_i} = T_{I_{p_i}}\theta, T\in \{G,F\}$. Thus, we write $\psi_{\text{glob}}$ as
\begin{equation}\label{eq:the transition system form}
\psi_{\text{glob}} = \bigvee\nolimits_{k=1}^{K} \left( \psi^{\land p_1}_k \land  \psi^{\land p_2}_k \land \ldots  \land \psi^{\land p_{n_p}}_k \right).
\end{equation}
The form \eqref{eq:the transition system form} clarifies that satisfaction of $\psi^{\land}_k$ corresponds to the satisfaction of a sequence of tasks $\psi^{\land p_i}_k$. Namely, recalling the semantics in \eqref{eq:semantics}, if  an index $k$ exists such that $\vec{x} \models \mathfrak{G}(\psi^{\land p_1}_k)\mathfrak{G}(\psi^{\land p_2}_k)\ldots \mathfrak{G}(\psi^{\land p_{n_p}}_k)$ then $\vec{x} \models \psi_{\text{glob}}$. Henceforth, a first approach to satisfy $\psi_{\text{glob}}$ is to directly choose one of the $K$ sequences of tasks and try to satisfy it using a feedback controller of the form \eqref{eq:distributed law}. However, such an approach would not be reactive to unexpected contingencies in the satisfaction of $\psi_{\text{glob}}$, as shown in the next example:
\begin{example}\label{ex:transition graph}
Let a global task $\psi_{\text{glob}}$ in DNF  
$\psi_{\text{glob}} =  (\psi^{\land p_1}_1 \land \psi^{\land p_2}_1) \lor (\psi^{\land p_1}_2 \land \psi^{\land p_2}_2)$
and assume $\bar{\psi}^{\land p_1}=\psi^{\land p_1}_1 = \psi^{\land p_1}_2$ for some common task $\bar{\psi}^{\land p_1}$. Equivalently we write $\psi_{\text{glob}}$ as $\psi_{\text{glob}} = \bar{\psi}^{\land p_1} \land  (\psi^{\land p_2}_1 \lor \psi^{\land p_2}_2)$. Looking at $\psi_{\text{glob}}$ using our graph intuition, then satisfaction of $\psi_{\text{glob}}$ is equivalent to first satisfying the graph $\mathfrak{G}(\bar{\psi}^{\land p_1})$ and then either $\mathfrak{G}(\psi^{\land p_2}_1)$ or $\mathfrak{G}(\psi^{\land p_2}_2)$. This shows a subtle, but important difference: before factoring $\bar{\psi}^{\land p_1}$, we had to decide a priori which of the two sequences $(\bar{\psi}^{\land p_1} \land \psi^{\land p_2}_1)$ or $(\bar{\psi}^{\land p_1} \land \psi^{\land p_2}_2)$ would be satisfied. After factorization, however, the decision between $\psi^{\land p_2}_1$ and $\psi^{\land p_2}_2$ can be postponed until after the satisfaction of the common task graph $\mathcal{G}(\bar{\psi}^{\land})$, enabling reactive selection if either $\psi^{\land p_2}_1$ or $\psi^{\land p_2}_2$ becomes infeasible (e.g., due to obstacles or agent failure).
\end{example}
Following these arguments, we introduce the notion of a graph transition system (GTS). This abstraction will enable representing sequences of task graphs satisfying $\psi_{\text{glob}}$ as \textit{runs} of a suitably defined transition system, which we compute \textit{online} rather than fix \emph{a priori} (Sec. \ref{sec:implementation}).
\begin{definition}
A graph transition system is a tuple 
$\Omega(\hat{\Gamma}_F,\{\hat{\Gamma}_{p_i}\}_{p_i = p_1}^{p_{n_p}} , \mathcal{A})$ where 
$\hat{\Gamma}_{p_i}\subset \Gamma,\; \forall p_i = p_1, \ldots p_{n_p}$, $\hat{\Gamma}_F\subset \Gamma$ is the set of \textit{terminal} graphs, and 
$\mathcal{A} : \hat{\Gamma}_{p_i} \rightarrow 2^{\hat{\Gamma}_{p_{i+1}}}$ is a transition relation. A \textit{run} $\sigma$ of $\Omega$ is a finite sequence of graphs 
$\sigma = \mathcal{G}_1 \mathcal{G}_2 \dots \mathcal{G}_L$, with 
$\sigma[k]$ denoting the $k$-th graph. A run $\sigma$ belongs to the language $\mathcal{L}_{\Omega}$, $\sigma \in \mathcal{L}_{\Omega}$, if 
$\sigma[1] \in \hat{\Gamma}_{p_1}$, 
$\sigma[L] \in \hat{\Gamma}_F$, and 
$\sigma[k+1] \in \mathcal{A}(\sigma[k])$ for all $k=1,\dots,L-1$.
\end{definition}
The GTS associated with a task $\psi_{\text{glob}}$ is constructed recursively (Alg.~\ref{alg:gts_construction}) by exploiting the distributivity of $\lor/\land$ to group tasks defined over common periods (cf. Example~\ref{ex:transition graph}). Namely, we start with a task  $\psi_{\text{glob}} = \bigvee_{k \in \mathcal{K}} 
\left( 
\psi^{\land p_1}_k \land \psi^{\land (p_i > p_1)}_k 
\right),
\quad 
\mathcal{K} = \{1,\ldots,K\},$
where for brevity 
$\psi^{\land (p_i > p_1)}_k := 
\psi^{\land p_{2}}_k \land \cdots \land \psi^{\land p_{n_p}}_k,$
and we initialize an empty automaton $\Omega$. The index set $\mathcal{K}$ is partitioned into disjoint subsets 
$\{\mathcal{K}_l\}_{l=1}^M$ (Line~\ref{alg:line:partition}) such that 
$\cup_l \mathcal{K}_l = \mathcal{K}$. 
Each subset groups indices sharing the same first-period task 
$\bar{\psi}^{\land p_1}_{\mathcal{K}_l} := \psi^{\land p_1}_k$ for all $k \in \mathcal{K}_l$ and,  
using distributivity, we have $\bigvee_{k \in \mathcal{K}_l}
\left(
\bar{\psi}^{\land p_1}_{\mathcal{K}_l}
\land
\psi^{\land (p_i > p_1)}_k
\right)
=
\bar{\psi}^{\land p_1}_{\mathcal{K}_l}
\land
\bigvee_{k \in \mathcal{K}_l}
\psi^{\land (p_i > p_1)}_k$. A graph $\mathfrak{G}(\bar{\psi}^{\land p_1}_{\mathcal{K}_l})$ is then added as a state in the GTS by adding it to $\hat{\Gamma}_{p_1}$ (Line~\ref{alg:line:add node}) and a transition is defined (Line~\ref{alg:line:add transition}) from $\mathfrak{G}(\bar{\psi}^{\land p_1}_{\mathcal{K}_l})$ toward the task graphs generated by the residual task
$\psi_{\text{next}} 
=
\bigvee_{k \in \mathcal{K}_l}
\psi^{\land (p_i > p_1)}_k,$ constructed recursively by the same procedure. The recursion proceeds until all periods are exhausted; nodes with empty transition sets are marked as terminal (Line~\ref{alg:line:add final}). Thus, solely using the distributivity property, the following lemma holds.
\begin{lemma}\label{lemma:sequence lemma}
    Consider a task $\psi_{\text{glob}}$ with GTS $\Omega$ as per Alg. \ref{alg:gts_construction}. For each run $\sigma \in \mathcal{L}_{\Omega}$ it holds that  $\vec{x} \models \sigma \Rightarrow \vec{x} \models \psi_{\text{glob}}$.
\end{lemma}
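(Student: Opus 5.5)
The plan is to prove the claim by induction on the number of periods $n_p$, following the recursive structure of Alg.~\ref{alg:gts_construction}. The key intermediate statement (call it $\star$) is stronger than the lemma. For any task in the form
\[
\psi = \bigvee\nolimits_{k\in\mathcal{K}}\left(\psi^{\land p_1}_k\land\cdots\land\psi^{\land p_{n_p}}_k\right)
\]
processed by the algorithm, every run $\sigma=\mathcal{G}_1\ldots\mathcal{G}_L$ of the resulting GTS has the form
\[
\mathcal{G}_m=\mathfrak{G}\big(\psi^{\land p_m}_{k^\star}\big),\quad m=1,\dots,L,
\]
for a single index $k^\star\in\mathcal{K}$, with $L=n_p$. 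Once $\star$ holds, the notational convention gives
\[
\vec{x}\models\sigma \;\Leftrightarrow\; \vec{x}\models \bigwedge\nolimits_{m}\psi^{\land p_m}_{k^\star}=\psi^{\land}_{k^\star}.
\]
The disjunction semantics in \eqref{eq:semantics} then yields $\vec{x}\models\psi_{\text{glob}}$, since $\psi_{\text{glob}}=\bigvee_k\psi^{\land}_k$ by the DNF form \eqref{eq:the transition system form}.

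\emph{Base case ($n_p=1$).} The algorithm partitions $\mathcal{K}$ into classes $\mathcal{K}_l$ and adds each graph $\mathfrak{G}(\bar{\psi}^{\land p_1}_{\mathcal{K}_l})$ to $\hat{\Gamma}_{p_1}$. None of these nodes has an outgoing transition, so all are terminal. A run is therefore a single graph $\mathfrak{G}(\psi^{\land p_1}_k)$ for some $k\in\mathcal{K}_l$, which is $\star$.

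\emph{Inductive step.} Take a run $\sigma\in\mathcal{L}_\Omega$. By construction $\sigma[1]=\mathfrak{G}(\bar{\psi}^{\land p_1}_{\mathcal{K}_l})$ for some $l$. The only transitions out of this node lead into the GTS built recursively from the residual task $\psi_{\text{next}}=\bigvee_{k\in\mathcal{K}_l}\psi^{\land(p_i>p_1)}_k$, which has $n_p-1$ periods. Hence the suffix $\sigma[2]\ldots\sigma[L]$ is a run of that sub-GTS, and the induction hypothesis gives an index $k^\star\in\mathcal{K}_l$ that realizes the suffix. Since $k^\star\in\mathcal{K}_l$, it also satisfies $\psi^{\land p_1}_{k^\star}=\bar{\psi}^{\land p_1}_{\mathcal{K}_l}$, so the same $k^\star$ realizes $\sigma[1]$. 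This establishes $\star$ for $\sigma$.

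The main obstacle is bookkeeping rather than logic. First, the GTS is defined only through the recursive procedure, so I must fix a precise reading in which the sub-GTS's initial nodes are exactly $\mathcal{A}(\sigma[1])$ and the terminal nodes are exactly the last-period nodes. Second, the map $\mathfrak{G}$ loses information: distinct tasks can share the same task graph, so satisfying a graph must be interpreted as satisfying the node's associated task. I would handle this by labeling each node with its task $\bar{\psi}^{\land p_i}_{\mathcal{K}_l}$ and reading $\vec{x}\models\sigma$ through these labels, as in the paper's convention $\vec{x}\models\mathfrak{G}(\psi)\equiv\vec{x}\models\psi$. A simpler alternative avoids tracking $k^\star$ explicitly. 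The identity
\[
\bigvee_{k\in\mathcal{K}_l}\big(\bar{\psi}^{\land p_1}_{\mathcal{K}_l}\land\psi^{\land(p_i>p_1)}_k\big)=\bar{\psi}^{\land p_1}_{\mathcal{K}_l}\land\psi_{\text{next}}
\]
shows directly that satisfying $\sigma[1]$ together with $\psi_{\text{next}}$ implies the $l$-th disjunct, and hence $\psi_{\text{glob}}$.
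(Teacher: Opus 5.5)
Your proposal is correct and is essentially the paper's argument: the paper justifies the lemma only by the distributivity identity $\bigvee_{k\in\mathcal{K}_l}(\bar{\psi}^{\land p_1}_{\mathcal{K}_l}\land\psi^{\land(p_i>p_1)}_k)=\bar{\psi}^{\land p_1}_{\mathcal{K}_l}\land\psi_{\text{next}}$ applied along the recursion of Alg.~\ref{alg:gts_construction}, and your induction (and your ``simpler alternative'') makes that argument explicit. The reading you need is that every recursive call of MakeNodes creates fresh nodes, so that $\Omega$ is a tree. Labeling nodes by their tasks alone is not enough, because two branches with different first-period tasks but an identical later-period conjunction would then share a node, and a run could splice the prefix of one disjunct onto the suffix of another.
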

\vspace{-0.3cm}
\begin{algorithm}[H]
\caption{Construction of the GTS}
\label{alg:gts_construction}
\begin{algorithmic}[1]
\STATE \textbf{set} $\psi_{\text{glob}} = \bigvee_{k\in \mathcal{K}} 
(\psi^{\land p_1}_k \land \dots \psi^{\land p_{n_p}}_k), \mathcal{K} = \{1, \ldots K\}$
\STATE \textbf{set}  $\{\hat{\Gamma}_{p_i}\}_{p_i=p_1}^{p_{n_p}}, \hat{\Gamma}_F, \mathcal{A} \leftarrow \emptyset$

\STATE \textbf{Function} MakeNodes($\psi = \bigvee_{k \in \mathcal{K}} ( \psi^{\land p_i}_k \land  \psi^{\land (p_j > p_i)}_k )$)
\STATE Partition $\mathcal{K}$ into groups $\{\mathcal{K}_l\}$ with identical $\psi^{\land p_i}_{\mathcal{K}_l}$ \label{alg:line:partition}
\FOR{each group $\mathcal{K}_l$}
    \STATE $\hat{\Gamma}_{p_i} \leftarrow \hat{\Gamma}_{p_i} \cup \{\mathfrak{G}(\psi^{\land p_i}_{\mathcal{K}_l})\}$ \label{alg:line:add node} 
    \IF{$p_i<p_{n_p}$} \label{alg:line:next step}
        \STATE $\psi_{\text{next}} = \bigvee_{k \in \mathcal{K}_l} \psi^{\land (p_j > p_i)}_k$, 
        \STATE $\mathcal{A}(\mathfrak{G}(\psi^{\land p_i}_{\mathcal{K}_l}) \leftarrow \text{MakeNodes}(\psi_{\text{next}})$  \label{alg:line:add transition}
    \ENDIF
    \STATE \textbf{if} $\mathcal{A}(\mathfrak{G}(\psi^{\land p_i}_{\mathcal{K}_l})) = \emptyset$\; \textbf{then} \; $\hat{\Gamma}_F \leftarrow \hat{\Gamma}_F \cup \{\mathfrak{G}(\psi^{\land p_i}_{\mathcal{K}_l})\}$\label{alg:line:add final}
\ENDFOR
\RETURN $\{\mathfrak{G}(\psi^{\land p_i}_{\mathcal{K}_l})\}_{l}$
\end{algorithmic}
\end{algorithm}
In Section \ref{sec:implementation}, we will use the transition relation $\mathcal{A}$ of the GTS $\Omega$ corresponding to a global task $\psi_{\text{glob}}$ to satisfy the given task reactively (cf. Example \ref{ex:transition graph}).

%% file: content/task_dec.tex
By Lemma~\ref{lemma:sequence lemma}, we have related the satisfaction of $\psi_{\text{glob}}$ to finding a run $\sigma \in \mathcal{L}_\Omega$. As explained in greater detail in Sec. \ref{sec:implementation}, we can do so by a sequential approach by starting from the satisfaction of a task graph $\mathfrak{G}(\psi^{\land p_1}) \in \hat{\Gamma}_{p_1}$, and then a task graph $\mathfrak{G}(\psi^{\land p_2}) \in \mathcal{A}(\mathfrak{G}(\psi^{\land p_1}))$, and continue recursively to satisfy $\mathfrak{G}(\psi^{\land p_i}) \in \mathcal{A}(\mathfrak{G}(\psi^{\land p_{i-1}}))$ until $\mathfrak{G}(\psi^{\land p_i}) \in \hat{\Gamma}_{F}$. Although this procedure alone does not simplify feedback synthesis directly, it structures the problem as satisfying a sequence of task graphs. 

At the feedback level, we must ensure that task sequences are consistent with the time-varying communication graph $\mathcal{G}_c(t)$. To this end, each task $\psi^{\land p_i}$ is decomposed into a communication consistent task $\tilde{\psi}^{\land p_i}$ such that $\vec{x} \models \mathfrak{G}(\tilde{\psi}^{\land p_i}) \Rightarrow \vec{x} \models \mathfrak{G}(\psi^{\land p_i})$. At a high level, this is achieved by decomposing each inconsistent relative predicate $\mu_{ij}$ into a \textit{sequence} of relative predicates $\tilde{\vec{\mu}}_{rs}^{ij}$ over the edges $(r,s) \in \epsilon(\vec{\pi}_{ij})$ of a path $\vec{\pi}_{ij}$ connecting $i$ to $j$ in $\mathcal{G}_c(t)$ (Section~\ref{sec:task rewriting}). As we show next, we are able to \textit{compute} such predicates $\tilde{\vec{\mu}}_{rs}^{ij}$ via convex optimization (Section~\ref{sec:final optimization}). Note that we already approached this problem in detail in \cite{me}, where a more detailed analysis of the problem is found. In the presentation, we denote the set of inconsistent edges at time $t$ for $\psi^{\land p_i}$ as $\mathfrak{I}( t,\psi^{\land p_i}) :=
\left\{
(i,j)
\;\middle|\;
\exists\, \mu_{ij} \in \psi^{\land p_i}
\text{ such that }
\eqref{eq:edge inconsistent}
\text{ or }
\eqref{eq:geometric inconsistent}
\text{ hold}
\right\}$, whereas consistent edges are given by $\mathfrak{E}(\psi^{\land p_i})
\setminus
\mathfrak{I}( t,\psi^{\land p_i}).
$

\subsection{Task rewriting via parametric predicates}\label{sec:task rewriting}
Lemma \ref{lemma:main decomposition lemma} and Corollary \ref{cor:task cor} clarify how our decomposition approach is defined, using paths of connected agents in $\mathcal{G}_c(t)$.  
\begin{lemma}[\hspace{0.01cm}\cite{me}]\label{lemma:main decomposition lemma}
For time $t$ and a predicate $\mu_{ij}$ with $\mathcal{H}_{ij}(\vec{\eta}_{ij})$, let a directed path $\vec{\pi}_{ij}$ from $i$ to $j$ in $\mathcal{G}_c(t)$. Let the conjunction
\begin{equation}\label{eq:path conjunction}
\tilde{\theta}_{ij}(\vec{e}_{\epsilon(\vec{\pi}_{ij})}):= \tilde{\theta}_{ij} = \bigwedge\nolimits_{(r,s) \in \epsilon(\vec{\pi}_{ij})} \tilde{\mu}^{ij}_{rs},
\end{equation}
with,  $\vec{e}_{\epsilon(\vec{\pi}_{ij})} = [\vec{e}_{ij}]_{(i,j) \in \epsilon(\vec{\pi}_{ij})}$, and   $\tilde{\mathcal{H}}^{ij}_{rs}(\tilde{\vec{\eta}}^{ij}_{rs})$ being the formation sets for each $\tilde{\mu}^{ij}_{rs}$ as per \eqref{eq:predicate definition}, such that 
\begin{equation}\label{eq:minkowsky sum condition}
\bigoplus\nolimits_{(r,s) \in \epsilon(\vec{\pi}_{ij})} \tilde{\mathcal{H}}^{ij}_{rs}(\tilde{\vec{\eta}}^{ij}_{rs}) \subseteq \mathcal{H}_{ij}(\vec{\eta}_{ij}).
\end{equation}
Then $\tilde{\theta}_{ij} = \top  \Rightarrow \mu_{ij} = \top$.
\end{lemma}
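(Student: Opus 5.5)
The plan is a telescoping argument along the path followed by a direct appeal to the Minkowski-sum inclusion \eqref{eq:minkowsky sum condition}. Write the path as $\vec{\pi}_{ij} = k_1 k_2 \ldots k_L$ with $k_1 = i$ and $k_L = j$. The first observation is purely algebraic: since $\vec{e}_{rs} = \vec{x}_s - \vec{x}_r$, the relative states along consecutive edges telescope,
\begin{equation*}
\sum_{(r,s)\in\epsilon(\vec{\pi}_{ij})} \vec{e}_{rs} = \sum_{l=1}^{L-1} \left(\vec{x}_{k_{l+1}} - \vec{x}_{k_l}\right) = \vec{x}_{k_L} - \vec{x}_{k_1} = \vec{e}_{ij}.
\end{equation*}
This identity holds for any state $\vec{x}$, and it is the only place where the path structure (consecutive edges sharing endpoints) is used.

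Next, assume $\tilde{\theta}_{ij} = \top$. By the definition of $\tilde{\theta}_{ij}$ in \eqref{eq:path conjunction} and the meaning of $\theta$ as a conjunction of predicates, we get $\vec{e}_{rs} \in \tilde{\mathcal{H}}^{ij}_{rs}(\tilde{\vec{\eta}}^{ij}_{rs})$ for every $(r,s) \in \epsilon(\vec{\pi}_{ij})$.

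By the definition of the Minkowski sum, iterated over the finitely many edges of the path, the sum of these vectors lies in $\bigoplus_{(r,s)\in\epsilon(\vec{\pi}_{ij})} \tilde{\mathcal{H}}^{ij}_{rs}(\tilde{\vec{\eta}}^{ij}_{rs})$. Combining this with the telescoping identity, $\vec{e}_{ij}$ belongs to that Minkowski sum. Hypothesis \eqref{eq:minkowsky sum condition} then gives $\vec{e}_{ij} \in \mathcal{H}_{ij}(\vec{\eta}_{ij})$, which by \eqref{eq:predicate definition} means $\mu_{ij} = \top$.

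No step is a genuine obstacle. The only point needing care is the orientation convention: $\vec{e}_{rs} = \vec{x}_s - \vec{x}_r$ must be taken along the directed path from $i$ to $j$, so that the sum telescopes to $\vec{e}_{ij}$ rather than $\vec{e}_{ji}$. This requires that each predicate $\tilde{\mu}^{ij}_{rs}$ be defined on the edge oriented as it appears in $\epsilon(\vec{\pi}_{ij})$. Note also that the closed-form identity \eqref{eq:minkowsky inclusion to linear constraints} for hyper-rectangles is not needed here; only the abstract inclusion \eqref{eq:minkowsky sum condition} is used. That identity becomes relevant only later, when \eqref{eq:minkowsky sum condition} is turned into linear constraints on the parameters $\tilde{\vec{\eta}}^{ij}_{rs}$.
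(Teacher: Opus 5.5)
Your proof is correct and follows essentially the same route as the argument the paper relies on; the paper defers this proof to \cite{me}, where the argument is the one you give. The relative states telescope along the directed path, $\vec{e}_{ij}=\sum_{(r,s)\in\epsilon(\vec{\pi}_{ij})}\vec{e}_{rs}$, so $\vec{e}_{rs}\in\tilde{\mathcal{H}}^{ij}_{rs}(\tilde{\vec{\eta}}^{ij}_{rs})$ on every edge places $\vec{e}_{ij}$ in the Minkowski sum, and hence in $\mathcal{H}_{ij}(\vec{\eta}_{ij})$ by \eqref{eq:minkowsky sum condition}; your orientation remark also agrees with the paper's definition of $\epsilon(\vec{\pi}_{ij})$ as the ordered pairs $(\vec{\pi}_{ij}[k],\vec{\pi}_{ij}[k+1])$.
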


\begin{corollary}\label{cor:task cor}
For time $t$ and period $I_{p_i}$, let $\psi^{\land p_i} = \land_{k} \varphi^{p_i}_k$, where $
\varphi_k^{p_i} = T_{I_{p_i}} \theta_k = T_{I_{p_i}}\!\left(\bigwedge_{(i,j)\in\mathcal{E}_k} \mu_{ij}\right), with 
\quad T\in\{G,F\}$, such that $\mathcal{E}_k \subseteq \mathfrak{E}(\psi^{\land p_i})$. Moreover, for all $k$, let
\begin{equation}\label{eq:rewritten equation}
\tilde{\varphi}^{p_i}_k
:= T_{I_{p_i}}(
\bigwedge_{(i,j)\in \mathcal{E}_k\setminus \mathfrak{I}( t,\psi^{\land p_i}) } \hspace{-0.3cm}\mu_{ij}
\;\land\!
\bigwedge_{(i,j)\in \mathcal{E}_k \cap \mathfrak{I}( t,\psi^{\land p_i})} \hspace{-0.3cm}\tilde{\theta}_{ij}),
\end{equation}
where predicates $\mu_{ij}, (i,j) \in \mathcal{E}_k \cap \mathfrak{I}( t,\psi^{\land p_i})$ are replaced by $\tilde{\theta}_{ij}$ as per \eqref{eq:path conjunction}.
Then,
$\vec{x} \models \tilde{\psi}^{\land p_i}
:= \bigwedge_k \tilde{\varphi}_k
\;\Rightarrow\;
\vec{x} \models \psi^{\land p_i}.
$
\end{corollary}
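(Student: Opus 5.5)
The plan is to prove Corollary~\ref{cor:task cor} pointwise in time and then lift the result through the temporal operators and the outer conjunction. Since both $\tilde{\psi}^{\land p_i}$ and $\psi^{\land p_i}$ are conjunctions indexed by the same $k$, and the semantics in \eqref{eq:semantics} make $\vec{x}\models\bigwedge_k\tilde\varphi^{p_i}_k$ equivalent to $\vec{x}\models\tilde\varphi^{p_i}_k$ for every $k$, it suffices to fix $k$ and show $\vec{x}\models\tilde{\varphi}^{p_i}_k \Rightarrow \vec{x}\models\varphi^{p_i}_k$.

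\textbf{Step 1 (pointwise implication).} Write $\tilde\theta_k$ for the inner formula of \eqref{eq:rewritten equation}, so that $\tilde\varphi^{p_i}_k = T_{I_{p_i}}\tilde\theta_k$. We claim that for every time $\tau$, $\tilde\theta_k(\tau)=\top$ implies $\theta_k(\tau)=\top$. Assume $\tilde\theta_k(\tau)=\top$. The edges $\mathcal{E}_k$ split into two classes.
\begin{itemize}
\item For $(i,j)\in\mathcal{E}_k\setminus\mathfrak{I}(t,\psi^{\land p_i})$, the predicate $\mu_{ij}$ appears unchanged in $\tilde\theta_k$, so $\mu_{ij}=\top$ at $\tau$ directly.
\item For $(i,j)\in\mathcal{E}_k\cap\mathfrak{I}(t,\psi^{\land p_i})$, the conjunct $\tilde\theta_{ij}$ holds at $\tau$. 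Its parameters are chosen so that the Minkowski inclusion \eqref{eq:minkowsky sum condition} holds, so Lemma~\ref{lemma:main decomposition lemma} yields $\mu_{ij}=\top$ at $\tau$.
\end{itemize}
Hence every conjunct of $\theta_k=\bigwedge_{(i,j)\in\mathcal{E}_k}\mu_{ij}$ holds at $\tau$.

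\textbf{Step 2 (temporal lifting).} For $T=G$, $\tilde\theta_k$ holds for all $\tau\in I_{p_i}$, so by Step 1 $\theta_k$ does as well, giving $\vec{x}\models G_{I_{p_i}}\theta_k$. For $T=F$, some $\tau^*\in I_{p_i}$ witnesses $\tilde\theta_k$, and the same $\tau^*$ witnesses $\theta_k$. Combining over all $k$ gives the claim.

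\textbf{Main subtlety.} The decomposition is computed at a fixed time $t$ using paths in $\mathcal{G}_c(t)$, while the predicates are then evaluated at later times $\tau$. The proof must therefore rely on the fact that Lemma~\ref{lemma:main decomposition lemma} is a purely geometric statement. Once the parameters $\tilde{\vec\eta}^{ij}_{rs}$ are fixed, the implication $\tilde\theta_{ij}\Rightarrow\mu_{ij}$ holds for any relative states, because $\vec e_{ij}=\sum_{(r,s)\in\epsilon(\vec\pi_{ij})}\vec e_{rs}$ telescopes along the path. It does not depend on whether the path remains in $\mathcal{G}_c(\tau)$.

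A smaller technical point concerns the case where an edge $(i,j)$ lies in several $\mathcal{E}_k$. Replacing it with the same $\tilde\theta_{ij}$ in each $\tilde\varphi^{p_i}_k$ is harmless, because the argument above is carried out separately for each $k$.
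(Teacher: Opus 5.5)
Your proof is correct and takes the route the paper intends. The paper gives no separate proof; it states the result as a corollary of Lemma~\ref{lemma:main decomposition lemma}. That argument is exactly yours: apply the lemma pointwise to each replaced conjunct, then lift through $G_{I_{p_i}}$/$F_{I_{p_i}}$ and the outer conjunction using the semantics \eqref{eq:semantics}. You also correctly flag the two points that matter. The implication is purely geometric, so it holds whether or not the path stays in $\mathcal{G}_c(\tau)$. And the $F$ case works because the replacement sits inside a single temporal operator, so one witness time serves all conjuncts.
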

\begin{figure}[b]
    \centering
    \vspace{-0.2cm}
    \includegraphics[width=\linewidth]{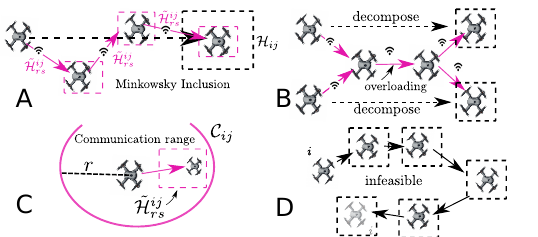}
    \caption{The four constraints required to achieve a task decomposition. (A) Minkowski inclusion constraint as per \eqref{eq:minkowsky sum condition}, (B) Task overloading when two tasks are decomposed along a common edge, (C) Communication Consistency constraint $\tilde{\mathcal{H}}_{rs}^{ij}(\tilde{\vec{\eta}}_{rs}^{ij}) \subseteq \mathcal{C}_{ij}$, (D) Cycle closure constraint.}
    \label{fig:dec figure}
\end{figure}

The results of Lemma~\ref{lemma:main decomposition lemma} and Corollary~\ref{cor:task cor} enable the construction of a task $\tilde{\psi}^{\land p_i}$ whose satisfaction implies that of $\psi^{\land p_i}$. Namely, for each inconsistent predicate $\mu_{ij}$ with $(i,j) \in \mathfrak{I}( t,\psi^{\land p_i})$, a decomposition path $\vec{\pi}_{ij}$ in $\mathcal{G}_c(t)$ is first identified. Such a path can be computed efficiently, for instance using a depth-first search with complexity $O(\mathcal{V} + \mathcal{E}_c(t))$~\cite[Sec.~29]{sedgewick2011algorithms}. In general, multiple decomposition paths may exist for a given pair $(i,j)$; this selection is discussed in Remark~\ref{remark:path selection}.
By the given paths selection, the task $\tilde{\psi}^{\land p_i}$ is obtained by replacing each $\mu_{ij}$ with the conjunction $\tilde{\theta}_{ij} = \bigwedge_{(r,s)\in \epsilon(\vec{\pi}_{ij})} \tilde{\mu}^{ij}_{rs}$, $\epsilon(\vec{\pi}_{ij}) \subseteq \mathcal{G}_c(t)$. The corresponding relative configuration sets $\tilde{\mathcal{H}}_{rs}^{ij}(\tilde{\vec{\eta}}_{rs}^{ij})$ are then selected to satisfy the Minowsky inclusion ~\eqref{eq:minkowsky sum condition} and the condition $\tilde{\mathcal{H}}_{rs}^{ij}(\tilde{\vec{\eta}}_{rs}^{ij}) \subseteq \mathcal{C}_{rs}$, where these constraints can be enforced by a set of convex constraints on the parameters $\tilde{\vec{\eta}}_{rs}^{ij}$ (see Prop \ref{prop:first prop}-\ref{prop:minkowski inclusion appendix} in Appendix for details).
By replacing all inconsistent predicates in such a manner, we obtain $\tilde{\psi}^{p_i}$ as communication consistent (cf. Def. \ref{def:communication consistency}). 

While the proposed procedure resolves the communication consistency problem, we should additionally ensure that the resulting task $\tilde{\psi}^{\land p_i}$ is satisfiable (i.e., there exists a state trajectory $\vec{x}$ such that $\vec{x}\models \tilde{\psi}^{\land p_i}$) or otherwise, no meaningful use of $\tilde{\psi}^{\land p_i}$ could be made to satisfy the communication inconsistent task $\psi^{\land p_i}$. This requires extra constraints on the selection of the relative configurations $\tilde{\mathcal{H}}_{rs}^{ij}(\tilde{\vec{\eta}}_{rs}^{ij})$ as discussed in greater detail in \cite[Sec. IV]{me}. In particular, the relative predicates $\tilde{\mathcal{H}}_{rs}^{ij}(\tilde{\vec{\eta}}_{rs}^{ij})$ should be chosen such that \textit{overloading} of edges and circular dependancies are resolved: \\
\textbf{Overloading}: When decomposing two predicates $\mu_{ij}$ and $\mu_{i'j'}$, with $(i,j),(i',j') \in \mathfrak{I}( t,\psi^{\land p_i})$, it is possible that their decomposition paths $\vec{\pi}_{ij}, \vec{\pi}_{i'j'}$ intersect, i.e., there exists $(r,s) \in \epsilon(\vec{\pi}_{ij}) \cap \epsilon(\vec{\pi}_{i'j'})$. Moreover, it is possible that a consistent predicate $\mu_{rs}$, i.e., $(r,s) \in  \mathfrak{E}(\psi^{\land p_i}) \setminus \mathfrak{I}( t,\psi^{\land p_i})$, is already present on $(r,s)$. In either case, we term the edge $(r,s)$ as \textit{overloaded}. To resolve overloading, we first ensure that only a single parametric predicate $\tilde{\mu}_{rs}^{ij}$ is defined for all paths $\vec{\pi}_{ij}$ such that $(r,s) \in \epsilon(\vec{\pi}_{ij})$. At the same time, if a consistent predicate $\mu_{rs}$ is already present on $(r,s)$, then we enforce the condition $\tilde{\mathcal{H}}_{rs}^{ij}(\tilde{\vec{\eta}}_{rs}^{ij}) \subseteq \mathcal{H}_{rs}(\vec{\eta}_{rs})$ such that, by Def. \ref{eq:predicate definition}, $\tilde{\mu}_{rs}^{ij} = \top \Rightarrow \mu_{rs} = \top$. In this way, the satisfaction of $\mu_{rs}$ is not compromised by the addition of $\tilde{\mu}_{rs}^{ij}$. Note that the inclusion $\tilde{\mathcal{H}}_{rs}^{ij}(\tilde{\vec{\eta}}_{rs}^{ij}) \subseteq \mathcal{H}_{rs}(\vec{\eta}_{rs})$ is equivalent to a set of linear inequalities in the parameters $\tilde{\vec{\eta}}_{rs}^{ij}$ (see Appendix, Cor. \ref{cor:stupic corollary}). We denote the set of overloaded edges as $\mathfrak{O}( t, \psi^{\land p_i}) = \{ (r,s) \mid (r,s) \in \epsilon(\vec{\pi}_{ij}) \land (r,s) \in \mathfrak{E}(\psi^{\land p_i}) \setminus \mathfrak{I}( t,\psi^{\land p_i}), (i,j)\in \mathfrak{I}(t , \psi^{\land p_i}) \}$. \par

\textbf{Circular Dependencies}:
Decomposing $\psi^{\land p_i}$ into $\tilde{\psi}^{\land p_i}$ requires careful examination of circular task dependencies as specified in Lemma~\ref{lemma:circular_dependency}:
\begin{lemma}( See \cite[Fact 4-5]{me})\label{lemma:circular_dependency}
Consider $\psi^{\land p_i} = \land_{k} \varphi_k^{p_i}$ and let $\vec{\pi}_{\omega}$ be a cycle in $\mathfrak{G}(\psi^{\land p_i})$ of length $L$. If at least $L-1$ predicates $\mu_{rs}, \; (r,s) \in \epsilon(\vec{\pi}_{\omega})$ are such that $\mu_{rs} \in \varphi_k^{p_i}= G_{I_{p_i}} \theta$ for some $k$, then
$\vec{0} \notin \bigoplus_{(r,s)\in\epsilon(\vec{\pi}_{\omega})} \mathcal{H}_{rs}(\vec{\eta}_{rs})\; \Rightarrow \;\vec{x}\not\models \psi^{\land p_i}$, for any state trajectory $\vec{x}: \mathbb{R}_{\geq 0} \rightarrow \mathbb{R}^{Nn_x}$. 
\end{lemma}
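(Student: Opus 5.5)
We argue by contraposition: assume some trajectory $\vec{x}$ satisfies $\psi^{\land p_i}$ and show that $\vec{0}\in\bigoplus_{(r,s)\in\epsilon(\vec{\pi}_{\omega})}\mathcal{H}_{rs}(\vec{\eta}_{rs})$. The underlying fact is the telescoping identity around a cycle. For any time $\tau$ and cycle $\vec{\pi}_{\omega}=k_1k_2\ldots k_L$ with $k_L=k_1$,
\[
\sum_{(r,s)\in\epsilon(\vec{\pi}_{\omega})}\vec{e}_{rs}(\tau)=\sum_{m=1}^{L-1}\bigl(\vec{x}_{k_{m+1}}(\tau)-\vec{x}_{k_m}(\tau)\bigr)=\vec{x}_{k_L}(\tau)-\vec{x}_{k_1}(\tau)=\vec{0}.
\]
It therefore suffices to find a single time $\tau^\star$ at which every predicate $\mu_{rs}$ on the cycle holds simultaneously, i.e., $\vec{e}_{rs}(\tau^\star)\in\mathcal{H}_{rs}(\vec{\eta}_{rs})$ for all $(r,s)\in\epsilon(\vec{\pi}_{\omega})$. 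Then $\vec{0}$ is a sum of one element from each set, so it lies in the Minkowski sum, which contradicts the hypothesis.

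To find $\tau^\star$, use the structure of $\psi^{\land p_i}$. All its subformulas share the same interval $I_{p_i}=[a,b]$ by Assmp.~\ref{ass:periodicity} and the construction of $\psi^{\land p_i}$. By the semantics \eqref{eq:semantics}, $\vec{x}\models\psi^{\land p_i}$ means $\vec{x}$ satisfies every conjunct $\varphi_k^{p_i}$. Each predicate $\mu_{rs}$ on the cycle appears inside some $\varphi_k^{p_i}=T_{I_{p_i}}\theta_k$. By assumption at least $L-1$ of them sit under a $G_{I_{p_i}}$ operator, so each of these holds for all $\tau\in I_{p_i}$, because $G_I\theta$ forces every conjunct of $\theta$ true on all of $I$.

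For the remaining predicates, split into two cases.
\begin{itemize}
\item If all edges are under $G$, pick any $\tau^\star\in I_{p_i}$.
\item If one predicate $\mu_{\bar r\bar s}$ lies in some $F_{I_{p_i}}\theta$, satisfaction gives a time $\tau^\star\in I_{p_i}$ at which $\theta$, and hence $\mu_{\bar r\bar s}$, is true. The $G$-predicates also hold at this $\tau^\star$, since it lies in the common interval.
\end{itemize}
Either way all cycle predicates hold at $\tau^\star$, and the telescoping identity finishes the argument.

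The delicate point is interpreting the hypothesis. The cycle has $L-1$ distinct edges (length $L$ counts the repeated endpoint), so ``at least $L-1$ under $G$'' may already cover every edge, and the $F$ case then never arises. In the other reading exactly one edge may sit under $F$, and the case split above handles it. We also need the edge orientation along the cycle to match the orientation of the predicates. If a predicate is stated on $(s,r)$, we convert it using $\vec{e}_{sr}=-\vec{e}_{rs}$ and the reflected hyper-rectangle $\mathcal{H}_{rs}=-\mathcal{H}_{sr}$, i.e., center $-\vec{c}_{sr}$ with the same $\vec{\alpha}$. We take this orientation convention as implicit in the statement.
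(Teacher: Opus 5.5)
Your proof is correct and uses the same idea as the result the paper relies on. The paper gives no proof of its own and defers to \cite[Fact 4--5]{me}. The relative states telescope to $\vec{0}$ around any cycle. Because every conjunct of $\psi^{\land p_i}$ shares the interval $I_{p_i}$, the always-predicates together with the (at most one) eventually-witness give a single time at which every cycle predicate holds, which places $\vec{0}$ in the Minkowski sum.

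Your treatment of both readings of ``length $L$'' and of edge orientation via $\mathcal{H}_{rs}=-\mathcal{H}_{sr}$ is appropriate. You implicitly use that $I_{p_i}$ is nonempty; this is standard, and without it an always-formula would hold vacuously and the statement itself would fail.
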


Henceforth, satisfying the obtained task $\tilde{\psi}^{\land p_i}$ we require that for every cycle $\vec{\pi}_{\omega}$ in $\mathfrak{G}(\tilde{\psi}^{\land p_i})$ containing at least $L-1$ tasks with an \textit{always} operator, the Minkowski sum of the relative predicates along $\vec{\pi}_{\omega}$ must contain the origin (Fig.~\ref{fig:dec figure}-D). To enforce this  we introduce auxiliary variables $\vec{\zeta}^{\mathrm{aux}}_{rs} \in \mathbb{R}^{n_x}$ for each edge $(r,s) \in \epsilon(\vec{\pi}_{\omega})$, subject to:
\begin{equation}\label{eq:cycle constraints}
    \sum_{(r,s)\in\epsilon(\vec{\pi}_{\omega})} \hspace{-0.4cm}\vec{\zeta}^{\mathrm{aux}}_{rs} = \vec{0}, \; \vec{\zeta}^{\mathrm{aux}}_{rs} \in \mathcal{H}_{rs}(\vec{\eta}_{rs}), \; \forall\,(r,s)\in\epsilon(\vec{\pi}_{\omega}),
\end{equation}
where here $\mathcal{H}_{rs}(\vec{\eta}_{rs})$ can be parametric (thus of the form $\tilde{\mathcal{H}}_{rs}^{ij}(\tilde{\vec{\eta}}_{rs}^{ij})$), or not. By definition of Minkowski sum, relation \eqref{eq:cycle constraints} certifies $\vec{0} \in \bigoplus_{(r,s)\in\epsilon(\vec{\pi}_{\omega})} \mathcal{H}_{rs}(\vec{\eta}_{rs})$, and reduce to linear inequalities in $\vec{\zeta}^{\mathrm{aux}}_{rs}$ and $\vec{\eta}_{rs}$. For presentation, we let $\mathfrak{C}(t,\psi^{\land p_i})$ be the set of cycles in $\mathfrak{G}(\tilde{\psi}^{\land p_i})$ obtained from the decomposition over the communication graph $\mathcal{G}_c(t)$.

 \subsection{Overall Optimization}\label{sec:final optimization}

In summary, in the process of rewriting $\psi^{\land p_i}$ into $\tilde{\psi}^{\land p_i}$ at time $t$, assume that the maximum length of the paths $\vec{\pi}_{ij}$ used for the decomposition is $L$. It can be shown that, to achieve the task decomposition, we introduce a number of optimization variables $M_{\text{var}}$ proportional to $M_{\text{var}} =
O\!\left(
L\left(
|\mathfrak{C}( t,\psi^{\land p_i})|\, n_x
+
|\mathfrak{I}( t,\psi^{\land p_i})|\, 2n_x
\right)
\right)$ (corresponding to the variables $\tilde{\vec{\eta}}_{rs}^{ij}$ and $\vec{\zeta}^{\text{aux}}_{rs}$), and a number of convex constraints $M_{\text{con}}$ proportional to $M_{\text{con}} = O(
(2^{n_x}\,2n_x)|\mathfrak{I}( t,\psi^{\land p_i})|
+ 2^{n_x}L|\mathfrak{I}( t,\psi^{\land p_i})| 
+ (2n_x\,2^{n_x})|\mathfrak{O}( t,\psi^{\land p_i})|
+ (n_x + 2n_x)|\mathfrak{C}( t,\psi^{\land p_i})|
)$. Consider now the vector $\vec{\xi} \in \mathbb{R}^{M_{\text{var}}}$ stacking all the decomposition variables and let $\mathfrak{g}\langle t,\psi^{\land p_i}\rangle  :  \mathbb{R}^{M_{\text{var}}} \rightarrow  \mathbb{R}^{M_{\text{con}}}$ be the function stacking all the convex inequality constraints required to decompose $\psi^{\land p_i}$ into $\tilde{\psi}^{\land p_i}$ (where the notation $\langle t, \psi^{\land p_i}\rangle$ applies to denote parametric dependence on $t$ and $\psi^{\land p_i}$). Then each vector $\vec{\xi}$ such that $\mathfrak{g}\langle t,\psi^{\land p_i}\rangle (\vec{\xi}) \leq \vec{0}$ corresponds to a valid decomposition of $\psi^{\land p_i}$ into $\tilde{\psi}^{\land p_i}$ at time $t$, for which $\vec{x} \models \tilde{\psi}^{\land p_i}\Rightarrow \vec{x} \models \psi^{\land p_i}$, and $\tilde{\psi}^{\land p_i}$ is communication consistent. In particular, given the system state $\vec{x}(t)$ at time $t$, we look for solutions $\vec{\xi}^{*}$ minimizing the convex cost $
\mathfrak{f}\langle t,\psi^{\land p_i}\rangle(\vec{\xi}) = \sum_{(i,j) \in \mathfrak{I}( t, \psi^{\land p_i}) }\sum_{(r,s) \in \epsilon(\vec{\pi}_{ij})} \| \vec{e}_{rs}(t) - \tilde{\vec{c}}_{rs}^{ij} \|^2$ which represents the sum of the squared distances of the relative vectors of each new relative configuration $\tilde{\mathcal{H}}_{rs}(\tilde{\vec{\eta}}^{ij}_{rs}), \tilde{\vec{\eta}}_{rs}^{ij} = [\tilde{\vec{c}}_{rs}^{ij},\tilde{\vec{\alpha}}_{rs}^{ij}], \; \forall (r,s) \in \epsilon(\vec{\pi}_{ij}),\; \forall (i,j) \in \mathfrak{I}( t, \psi^{\land p_i}) $ from the current relative positions $\vec{e}_{rs}(t)$. Informally,  $\mathfrak{f}\langle t,\psi^{\land p_i}\rangle(\vec{\xi})$ incentivizes the desired relative vectors of the relative configurations $\tilde{\mathcal{H}}_{rs}^{ij}(\tilde{\vec{\eta}}_{rs}^{ij})$ in $\tilde{\psi}^{\land p_i}$, (introduced by the decomposition) to be as close as possible to the actual relative state $\vec{e}_{ij}(t)$ of the agents at time $t$. To conclude, decomposing $\psi^{\land p_i}$ into $\tilde{\psi}^{\land p_i}$ is achieved by solving the convex program :
\begin{equation}\label{eq:optimizaiton final}
    \mathcal{D}\langle t, \psi^{\land p_i}\rangle: \min_{\vec{\xi}} \;  \mathfrak{f}\langle t, \psi^{\land p_i}\rangle(\vec{\xi})\; \text{s.t.} \;\mathfrak{g}\langle t, \psi^{\land p_i}\rangle(\vec{\xi}) \leq \vec{0},
\end{equation}
and we say that $\psi^{\land p_i}$ is \textit{decomposable} if $\mathcal{D}\langle t, \psi^{\land p_i}\rangle$ has a solution, and,  with an abuse of notation, we write $\tilde{\psi}^{\land p_i} = \mathcal{D}\langle t, \psi^{\land p_i}\rangle$ when $\tilde{\psi}^{\land p_i}$ is obtained by decomposing $\psi^{\land p_i}$ at time $t$ over the communication graph $\mathcal{G}_c(t)$ by solving \eqref{eq:optimizaiton final}.

\begin{remark}\label{remark:path selection}
Feasibility of $\mathcal{D}\langle t, \psi^{\land p_i}\rangle$ depends on the choice of decomposition paths $\vec{\pi}_{ij}$. Longer paths introduce more constraints (e.g., overloading and circular dependencies), but must be sufficiently long to enable decomposition. In particular, for a relative configuration $\mathcal{H}_{ij}(\vec{\eta}_{ij})$, $\vec{\eta}_{ij} = [\vec{c}_{ij},\vec{\alpha}_{ij}]$, decomposed over the relative configurations $\tilde{\mathcal{H}}_{rs}^{ij}(\tilde{\vec{\eta}}_{rs}^{ij})$, the path length $L$ should satisfy $L \geq \left\lceil \frac{\|\vec{c}_{ij}\|}{r} \right\rceil$ to ensure feasibility of the Minkowski constraint \eqref{eq:minkowsky sum condition}, jointly with with the communication constraint $\tilde{\mathcal{H}}_{rs}^{ij}(\tilde{\vec{\eta}}_{rs}^{ij}) \subseteq \mathcal{C}_{rs},\; \forall (r,s)\in \epsilon(\vec{\pi}_{ij})$.
\end{remark}

    
    
    
    
    
    

%% file: content/implementation.tex
Based on the GTS construction and the task decomposition procedure, the algorithm used to satisfy the global task $\psi_{\text{glob}}$ is presented in Alg.~\ref{alg:final algorithm}. At a high level, the proposed method alternates between \emph{task decomposition} and \emph{continuous-time execution} of a feedback controller associated with a selected task graph. Specifically, the algorithm is initialized at $t=0$ with the global task $\psi_{\text{glob}}$, its corresponding GTS $\Omega(\hat{\Gamma}_F,\{\hat{\Gamma}_{p_i}\}_{p_i=p_1}^{p_{n_p}}, \mathcal{A})$, the initial state of the system $\vec{x}_0$, and the communication graph available at the initial time $\mathcal{G}_c(t)$ (Lines \ref{alg:line:init alg}-\ref{alg:line:init gts}). By the definition of the GTS $\Omega$, a set of candidate task graphs that may be executed at each period is denoted by $\hat{\Gamma}_{\text{on}}$ and initialized as $\hat{\Gamma}_{\text{on}} = \hat{\Gamma}_{p_1}$. We use the set $\hat{\Gamma}_{\text{on}}$ to find a $\sigma \in \mathcal{L}_{\Omega}$ in the GTS \textit{online} by following the transition relation $\mathcal{A}$ of $\Omega$ starting by $\hat{\Gamma}_{\text{on}}= \hat{\Gamma}_{p_1}$. Namely, at the first iteration, the algorithm searches for a task $\psi^{\land p_1}$ whose corresponding task graph $\mathfrak{G}(\psi^{\land p_1})$ belongs to the admissible candidate set $\hat{\Gamma}_{\text{on}}$ and is decomposable with respect to the current communication graph $\mathcal{G}_c(t)$ (Line \ref{alg:line:search}). When such a task graph is found, the associated decomposed task graph $\mathfrak{G}(\tilde{\psi}^{\land p_1}) = \mathcal{D}\langle t, \psi^{\land p_1}\rangle$
is selected for execution. Recalling that for any task $\psi^{\land p_i}$ with interval $I_{p_i} = [a_{p_i}, b_{p_i}]$, the activation time and horizon are given by $t_{\text{ac}}(\psi^{\land p_i}) = a_{p_i}, t_{\text{end}}(\psi^{\land p_i}) = b_{p_i}$, then a feedback controller  $\vec{u}_{i}\langle t,\mathfrak{G}(\tilde{\psi}^{\land p_1})\rangle :
[t, b_{p_1}] \rightarrow \mathbb{U}$ is implemented on the system, satisfying Assmp.~\ref{ass:controller assamp} (Line \ref{alg:line:execution}). Under this controller, the resulting state trajectory
$\vec{x}_{p_1} : [t, b_{p_1}] \rightarrow \mathbb{R}^{N n_x}$ with $\vec{x}_{p_1}(\tau)=q_{\text{sol}}\!\left(\tau;\vec{x}_0,\{\vec{u}_{i}\langle \tau,\mathfrak{G}(\tilde{\psi}^{\land p_1})\rangle\}_{i\in \mathcal{V}}\right)\; \forall \tau \in [t, b_{p_1}]$ satisfies $\tilde{\psi}^{\land p_1}$, and thus, by Corollary \ref{cor:task cor}, also satisfies $\psi^{\land p_1}$. Moreover, by Assmp.~\ref{ass:controller assamp}, the graph $\mathfrak{G}(\tilde{\psi}^{\land p_1})$ remains communication consistent with $\mathcal{G}_c(\tau)$ over the entire interval $\tau \in [t,b_{p_1}]$. Once the execution interval ends at time $\tau=b_{p_1}$, the algorithm updates the set of candidate task graphs that can be executed next as
$\hat{\Gamma}_{\text{on}} = \mathcal{A}(\mathfrak{G}(\psi^{\land p_1})) \subseteq \hat{\Gamma}_{p_2}$. Using this updated set, the algorithm repeats the same operations accomplished for the first period. Namely, a task graph $\mathfrak{G}(\psi^{\land p_2})$ decomposable over the current communication graph $\mathcal{G}_c(b_{p_1})$ is searched for and a trajectory $\vec{x}_{p_2}:[b_{p_1}, b_{p_2}] \rightarrow \mathbb{R}^{Nn_x}$ is computed, under the feedback law  $\vec{u}_{i}\langle b_{p_1},\mathfrak{G}(\tilde{\psi}^{\land p_2})\rangle : [b_{p_1}, b_{p_2}] \rightarrow \mathbb{U}$, with $\mathfrak{G}(\tilde{\psi}^{\land p_2}) = \mathcal{D}\langle b_{p_1}, \mathfrak{G}(\tilde{\psi}^{\land p_2})\rangle$, such that $\vec{x}_{p_2} \models \psi^{\land p_2}$ and we update $\hat{\Gamma}_{\text{on}} = \mathcal{A}(\mathfrak{G}(\psi^{\land p_2})) \subseteq \hat{\Gamma}_{p_3}$. This procedure is repeated sequentially for each period. Upon reaching a terminal graph in the GTS, the algorithm terminates, yielding a satisfying trajectory formed by the concatenation of the trajectories generated over each execution interval. By construction, this trajectory satisfies the sequence of task graphs $\sigma = \mathfrak{G}(\psi^{\land p_1}) \, \mathfrak{G}(\psi^{\land p_2}) \, \mathfrak{G}(\psi^{\land p_3}) \, \ldots$
in \(\mathcal{L}_{\Omega}\). 
\begin{theorem}
Consider a global task $\psi_{\text{glob}}$, the initial time $t_0=0$ and initial state $\vec{x}_0$. The closed-loop trajectory $\vec{x}: [t_0,b_{n_p}] \rightarrow \mathbb{R}^{Nn_x}$ obtained by Alg.~\ref{alg:final algorithm} through the concatenation of closed-loop trajectories
$\vec{x}(t) = \vec{x}_{p_i}(t), \; \forall t \in [b_{p_{i-1}}, b_{p_i}], i = 1,\dots,n_p,$ with $b_{p_0} := t_0$, is such that $
\vec{x} \models \psi_{\text{glob}}.$
\end{theorem}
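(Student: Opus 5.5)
The plan is to reduce the claim to Lemma~\ref{lemma:sequence lemma}. We show that the concatenated trajectory satisfies every element of a run $\sigma = \mathfrak{G}(\psi^{\land p_1})\cdots\mathfrak{G}(\psi^{\land p_{n_p}}) \in \mathcal{L}_{\Omega}$ produced online by Alg.~\ref{alg:final algorithm}. Implicitly, the statement presupposes that the algorithm does not abort, i.e., at each period the search in $\hat{\Gamma}_{\text{on}}$ returns some decomposable task graph. I would make this explicit as the standing hypothesis under which the theorem is read.

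The argument proceeds by induction over the periods $p_1,\dots,p_{n_p}$. At iteration $i$, the algorithm is at time $b_{p_{i-1}}$ with state $\vec{x}(b_{p_{i-1}})$ and selects $\mathfrak{G}(\psi^{\land p_i}) \in \hat{\Gamma}_{\text{on}}$, where $\hat{\Gamma}_{\text{on}} = \hat{\Gamma}_{p_1}$ for $i=1$ and $\hat{\Gamma}_{\text{on}} = \mathcal{A}(\mathfrak{G}(\psi^{\land p_{i-1}}))$ otherwise. It also computes $\tilde{\psi}^{\land p_i} = \mathcal{D}\langle b_{p_{i-1}},\psi^{\land p_i}\rangle$.

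By construction of \eqref{eq:optimizaiton final}, $\tilde{\psi}^{\land p_i}$ is communication consistent with $\mathcal{G}_c(b_{p_{i-1}})$. Since Assumption~\ref{ass:periodicity} gives $b_{p_{i-1}} < a_{p_i} = t_{\text{ac}}(\tilde{\psi}^{\land p_i})$, Assumption~\ref{ass:controller assamp} applies and yields a controller whose trajectory satisfies $\vec{x}_{p_i} \models \tilde{\psi}^{\land p_i}$. Corollary~\ref{cor:task cor}, together with Lemma~\ref{lemma:main decomposition lemma}, then gives $\vec{x}_{p_i} \models \psi^{\land p_i}$. The first iteration has the same structure with $b_{p_0} = t_0 = 0$; there one also needs $0 < a_{p_1}$, or else the convention that the controller is started before activation.

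Next, we pass from the pieces $\vec{x}_{p_i}$ to the concatenation $\vec{x}$. Every conjunct of $\psi^{\land p_i}$ has the form $T_{I_{p_i}}\theta$, so by the semantics \eqref{eq:semantics} its satisfaction depends only on $\vec{x}$ restricted to $I_{p_i} = [a_{p_i},b_{p_i}] \subseteq [b_{p_{i-1}}, b_{p_i}]$. On that interval $\vec{x} = \vec{x}_{p_i}$. The endpoints coincide: $\vec{x}_{p_i}$ starts from $\vec{x}_{p_{i-1}}(b_{p_{i-1}})$, so the concatenation is well defined and continuous. Hence $\vec{x} \models \psi^{\land p_i}$ for every $i$, i.e., $\vec{x} \models \mathfrak{G}(\psi^{\land p_1})\cdots\mathfrak{G}(\psi^{\land p_{n_p}})$ in the sequence notation.

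Finally, the run is in $\mathcal{L}_{\Omega}$: $\sigma[1] \in \hat{\Gamma}_{p_1}$, each $\sigma[k+1] \in \mathcal{A}(\sigma[k])$ by the update of $\hat{\Gamma}_{\text{on}}$, and the algorithm stops only upon reaching a graph in $\hat{\Gamma}_F$. Lemma~\ref{lemma:sequence lemma} then gives $\vec{x} \models \psi_{\text{glob}}$.

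I expect the main obstacle to be the bookkeeping of well-posedness rather than any logic. Three points need care. First, one must state clearly that decomposability at every step is assumed, since the theorem is otherwise false when $\hat{\Gamma}_{\text{on}}$ contains no decomposable graph. Second, one must justify $\tau < t_{\text{ac}}$ at each switching time; this uses the strict ordering $b_{p_{i-1}} < a_{p_i}$ from Assumption~\ref{ass:periodicity}. Third, one must check that the run reaching $\hat{\Gamma}_F$ has length exactly $n_p$. This follows from Alg.~\ref{alg:gts_construction}, since nodes are marked terminal only when $p_i = p_{n_p}$ or no successors exist, and the latter does not occur for intermediate periods because every disjunct of the DNF contains a conjunction for every period.
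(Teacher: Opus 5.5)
Your proposal is correct and follows essentially the same route as the paper: apply Assumption~\ref{ass:controller assamp} period by period to the decomposed task $\tilde{\psi}^{\land p_i}$, recover $\psi^{\land p_i}$ via Corollary~\ref{cor:task cor}, use the $\hat{\Gamma}_{\text{on}}$ updates to certify that the executed sequence is a run in $\mathcal{L}_{\Omega}$, and conclude with Lemma~\ref{lemma:sequence lemma}. Your explicit handling of three points makes rigorous what the paper only asserts ``by construction'': the decomposability hypothesis (which the paper states only in a later remark), the strict ordering $b_{p_{i-1}} < a_{p_i}$ needed to invoke Assumption~\ref{ass:controller assamp}, and the fact that satisfaction is preserved under concatenation because each conjunct depends only on its own interval.
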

Henceforth, Alg. \ref{alg:final algorithm} provides a solution for feedback-based satisfaction of $\psi_{\text{glob}}$ under time-varying communication constraints, as per our problem formulation.
\begin{remark}
In Alg. \ref{alg:final algorithm}, we assumed at least one decomposable task $\mathfrak{G}(\psi^{\land p_i})$ exists in $\hat{\Gamma}_{\text{on}}$ per period. Infeasibility occurs when disconnected graphs $\mathcal{G}_c(t)$ lack decomposition paths $\vec{\pi}_{ij}$ for inconsistent predicates, or when excessive circular dependencies in $\mathfrak{G}(\tilde{\psi}^{\land p_i})$ render \eqref{eq:optimizaiton final} unsolvable. To mitigate this, agents can gather toward a common state to improve $\mathcal{G}_c(t)$ connectivity. Alternatively, for strict real-time requirements, we relax \eqref{eq:optimizaiton final} to $\mathfrak{g}\langle t, \psi^{\land p_i}\rangle(\vec{\xi}) \leq \vec{\lambda}$ for some $\vec{\lambda} \in \mathbb{R}^{M_{\text{con}}}_{\geq 0}$ and compute a minimum-violating solution.
\end{remark}

\begin{algorithm}[H]
\caption{}
\label{alg:final algorithm}
\begin{algorithmic}[1]
\STATE \textbf{set} $\text{finished} \leftarrow \text{False}$, $t \leftarrow 0$, $\vec{x}_0 \rightarrow \vec{x}$ \label{alg:line:init alg}
\STATE \textbf{set} $\Omega \leftarrow GTS(\psi_{\text{glob}})$, $\hat{\Gamma}_{\text{on}} \leftarrow \hat{\Gamma}_{p_1}$ \label{alg:line:init gts}
\WHILE{ \textbf{not} finished}
     \FOR{$\mathfrak{G}(\psi^{\land p_i})$ in $\hat{\Gamma}_{\text{on}}$}  \label{alg:line:search}
        \STATE Attempt $\mathfrak{G}(\tilde{\psi}_k^{\land p_i}) = \mathcal{D} \langle t,\tilde{\psi}_k^{\land p_i}\rangle$ using \eqref{eq:optimizaiton final} 
        \IF{problem is feasible}
             \STATE $\vec{x}_{p_i}(t) = q_{\text{sol}}(\tau ;\vec{x}_0,\{\vec{u}_{i}\langle b_{p_{i-1}},\mathfrak{G}(\tilde{\psi}^{\land p_i})\rangle\}_{i\in \mathcal{V}})$ \label{alg:line:execution}
             \STATE $\vec{x}_0 \leftarrow \vec{x}_{p_i}(b_{p_{i}})$
             \STATE \textbf{break}
        \ELSE
            \STATE \textbf{continue}
        \ENDIF
     \ENDFOR
     \STATE $\hat{\Gamma}_{\text{on}} \leftarrow \mathcal{A}(\mathfrak{G}(\psi^{\land p_i}))$,
     \STATE \textbf{if} $\hat{\Gamma}_{\text{on}} = \emptyset$ \textbf{then} $\text{finished} \rightarrow True$
\ENDWHILE

\end{algorithmic}
\end{algorithm}



%% file: content/numerical_studies.tex
We evaluate our framework in terms of the scalability of the decomposition approach and hardware implementation.
\subsubsection*{\textbf{Scalability}}
To evaluate scalability, we measure the computation time required to decompose a task graph $\mathfrak{G}(\psi^{\land p_i})$ on a communication graph $\mathcal{G}_c(\mathcal{V}, \mathcal{E}_c)$ by solving \eqref{eq:optimizaiton final}, including the graph operations needed to compute the paths $\vec{\pi}_{ij}$ and cycles $\vec{\pi}_{\omega}$. The results are shown as heatmaps in Fig.~\ref{fig:scalability}. Each row corresponds to a fixed number of agents $N \in \{10,30,50\}$. For each configuration, we generate a random communication graph $\mathcal{G}_c(\mathcal{V},\mathcal{E}_c)$ with connectivity parameter $p_{\text{comm}}\in\{0.1,0.5,0.9\}$ (y-axis), selecting randomly $|\mathcal{E}_c|=\lceil p_{\text{comm}}N(N-1)\rceil$ edges. A task graph $\mathfrak{G}(\psi^{\land p_i})$ is then generated with edges $|\mathfrak{E}(\psi^{\land p_i})|=\lceil p_{\text{task}}N(N-1)\rceil$ randomly selected for parameter $p_{\text{task}}\in\{0.1,0.5,0.9\}$ (x-axis). A fraction $p_{\text{inc}}\in\{0.1,0.3,0.5\}$ of these edges is enforced communication inconsistent (columns in Fig.~\ref{fig:scalability}), while the remaining edges are selected to be communication consistent up to the available communication edges $\mathcal{E}_c$. For each pair $(p_{\text{task}},p_{\text{comm}})$, we generate 20 random instances of the decomposition problem and report the average computation time. In all experiments, the time spent on graph operations, to find decomposition paths and cycles, dominates the time required to solve \eqref{eq:optimizaiton final}, which for all settings is below $0.5s$. From Fig.~\ref{fig:scalability}, the computational time is mainly driven by task connectivity. When both communication and task connectivity are high, computation is dominated by the enumeration of cycles $\vec{\pi}_{\omega}$ required to avoid infeasible tasks (Sec.~\ref{sec:task rewriting}). The percentage of inconsistent tasks has a minor effect at high communication connectivity, but the impact is more pronounced at low communication connectivity, where longer decomposition paths are required to decompose inconsistent predicates in the task graph.
\begin{figure}
    \centering
    \includegraphics[width=0.9\linewidth]{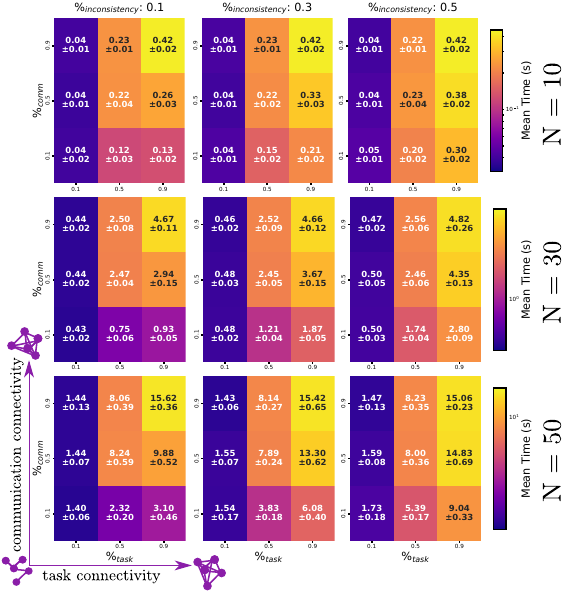}
    \caption{Scalability test. Computational time is reported in seconds. Experiments were conducted on a system equipped with an Intel Core i7-1265U processor and 32 GB of RAM.}
    \label{fig:scalability}
    \vspace{-0.3cm}
\end{figure}
\subsubsection*{\textbf{Hardware experiments}}
We validated the framework proposed in Alg. \ref{alg:final algorithm} through laboratory experiments\footnote{Video: \url{https://youtu.be/qLjHKBQRmYw}; Code: \url{https://github.com/bjarnemoro/crazyflie_ros}, \url{https://github.com/gregoriomarchesini/faste_decomposition}} using a swarm of seven Crazyflie drones integrated with ROS2 using the \textit{Crazyswarm} library \cite{crazyswarm}. The experiments demonstrate the agents successfully achieving a global task $\psi_{\text{glob}}$ as represented by the GTS in Fig. \ref{fig:gts} and leveraging Alg. \ref{alg:final algorithm}. In Fig. \ref{fig:gts}, each node signifies a possible task graph $\mathfrak{G}(\psi^{\land p_i})$, where $\psi^{\land p_i}$ is the conjunction of the individual tasks $\varphi$ within that node. The sequence $\sigma$ selected to satisfy $\psi_{\text{glob}}$ is highlighted in red. For brevity in Fig. \ref{fig:gts} we wrote $\\|\vec{e}_{ij} - \vec{c}_{ij}\|_{\infty} \leq \alpha$ to denote the relative configuration set $\mathcal{H}_{ij}(\vec{\eta}_{ij}) = \{ \vec{e}_{ij}\mid H(\vec{e}_{ij} - \vec{c}_{ij}) - M \vec{1}\alpha\}$ as per \eqref{eq:predicate level set}.

\usetikzlibrary{matrix, arrows.meta}

\begin{figure}
\centering
\resizebox{1.\columnwidth}{!}{
\begin{tikzpicture}[>=Stealth, thick]

\tikzset{
  taskbox/.style={
    draw, 
    rectangle, 
    rounded corners=1pt, 
    inner sep=4pt, 
    align=left, 
    fill=white,
    font=\scriptsize
  },
  labelnode/.style={
    font=\small\bfseries,
    inner sep=2pt
  }
}

\matrix (m) [matrix of nodes, 
             row sep=2.5em, 
             column sep=1em, 
             nodes={anchor=center},
             ampersand replacement=\&] 
{
  |[labelnode]| $I_{p_1} = [25,30]$ \& 
    |[taskbox, alias=root,draw=red]| {
      $\begin{aligned}
        &\varphi = G_{I}\,\|e_{1,11}-[2,0]\|_\infty \leq 0.2 \\
        &\varphi = G_{I}\,\|e_{1,3}-[1,1.4]\|_\infty \leq 0.2 \\
        &\varphi = G_{I}\,\|e_{11,3}-[-1,1.4]\|_\infty \leq 0.1 \\
        &\varphi = G_{I}\,\|x_1-[0,0]\|_\infty \leq 0.1
      \end{aligned}$
    } \& \\ 
  |[labelnode]| $I_{p_2} = [75,80]$ \& 
    |[taskbox, alias=p2spec,draw=red]| {
      $\begin{aligned}
        &\varphi = F_{I}\,\|e_{1,6}-[3.4,0]\|_\infty \leq 0.2\\
        &\varphi = G_{I}\,\|x_1-[0,0]\|_\infty \leq 0.2
      \end{aligned}$
    } \&
    |[taskbox, alias=p2alt]| {
      $\begin{aligned}
        &\varphi = F_{I}\,\|e_{3,6}-[3.4,0]\|_\infty \leq 0.2\\
        &\varphi = G_{I}\,\|x_3-[0,0]\|_\infty \leq 0.2
      \end{aligned}$
    } \\ 
  |[labelnode]| $I_{p_3} = [140,160]$ \& 
    |[taskbox, alias=p3spec,,draw=red]| {
      $\begin{aligned}
        &\varphi = G_{I}\,\|e_{6,4}-[0,1]\|_\infty \leq 0.2\\
        &\varphi = G_{I}\,\|e_{6,5}-[0,-1]\|_\infty \leq 0.2\\
        &\varphi = G_{I}\,\|e_{6,11}-[-1.7,0.4]\|_\infty \leq 0.1\\
        &\varphi = F_{I}\,\|e_{6,10}-[-1.7,-0.4]\|_\infty \leq 0.1\\
        &\varphi = F_{I}\,\|x_6-[0.5,0]\|_\infty \leq 0.2
      \end{aligned}$
    } \&
    |[taskbox, alias=p3alt]| {
      $\begin{aligned}
        &\varphi = G_{I}\,\|e_{4,6}-[0,1]\|_\infty \leq 0.2\\
        &\varphi = G_{I}\,\|e_{5,6}-[0,-1]\|_\infty \leq 0.2\\
        &\varphi = G_{I}\,\|e_{11,6}-[-1.7,0.4]\|_\infty \leq 0.1\\
        &\varphi = F_{I}\,\|e_{10,6}-[-1.7,-0.4]\|_\infty \leq 0.1\\
        &\varphi = F_{I}\,\|x_4-[0.5,0]\|_\infty \leq 0.2
      \end{aligned}$
    } \\ 
};

\draw[->] (root.south) -- (p2spec.north);
\draw[->] (root.south) -- (p2alt.north);
\draw[->] (p2spec.south) -- (p3spec.north);
\draw[->] (p2alt.south) -- (p3alt.north);

\end{tikzpicture}
}
\caption{Graph Transition System $\Omega$ for the experiments.}
  \vspace{-0.3cm}
\label{fig:gts}
\end{figure}

%% file: content/conclusion.tex
We proposed a communication-aware STL decomposition framework for multi-agent formations. Using a graph transition system and task redistribution, the approach maintains decentralized feedback-based reactivity under time-varying communication. Scalability across agent count and task complexity was verified in simulation and hardware experiments. Future work includes scaling experiments to larger swarms.

%% file: content/appendix.tex
In this section, we show that: 1) the inclusion $\mathcal{H}_{ij}(\vec{\eta}) \subseteq \mathcal{C}_{ij}$ in a convex set $\mathcal{C}_{ij}$ (e.g., the communication set in Def. \ref{def:communication consistency}) is equivalent to a set of convex constraints on $\vec{\eta}$ (Prop. \ref{prop:first prop}-Cor. \ref{cor:stupic corollary}) 2) the Minkowski inclusion condition $\bigoplus\nolimits_{(r,s) \in \epsilon(\vec{\pi}_{ij})} \mathcal{H}_{rs}(\vec{\eta}_{rs}) \subseteq \mathcal{H}_{ij}(\vec{\eta}_{ij})$ (appearing in \eqref{eq:minkowsky sum condition}) is equivalent to a set of linear inequalities in $\vec{\eta}_{rs},\; \forall (r,s) \in \epsilon(\vec{\pi}_{ij})$ (Prop. \ref{prop:minkowski inclusion appendix}). 
\begin{proposition}\label{prop:first prop}
Let a relative configuration $\mathcal{H}_{ij}(\vec{\eta}_{ij})$ and a set $\mathcal{C}_{ij} := \{ \vec{e}_{ij} \in \mathbb{R}^{n_x} \mid g_k(\vec{e}_{ij}) \leq 0, \;  k = 1, \ldots n_g \}$, where $n_g \geq 1$ and each $g_k: \mathbb{R}^{n_x} \rightarrow \mathbb{R}$ is convex. Then $\mathcal{H}_{ij}(\vec{\eta}_{ij}) \subseteq \mathcal{C}_{ij}$ if an only if 
\begin{equation}\label{eq:equivalence convex inclusion}
  g_k(v_l(\vec{\eta}_{ij})) \leq 0, \; \forall l=1,\ldots 2^{n_x}\; ,\; \forall k = 1, \ldots n_g.
\end{equation}
where each $v_l : \mathbb{R}^{2n_x} \rightarrow \mathbb{R}^{n_x},\; l = 1, \ldots 2^{n_x}$ takes the form 
\begin{equation}\label{eq:vertex equation}
    v_l(\vec{\eta}_{ij}) = \vec{c}_{ij} + \frac{\vec{\alpha}_{ij}}{2} \odot \vec{s}_l, \quad \vec{s}_l \in \{-1,1\}^{n_x}.
\end{equation}
\end{proposition}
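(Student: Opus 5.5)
The plan is to use the standard fact that a convex function on a polytope attains its maximum at a vertex, together with the observation that $\mathcal{H}_{ij}(\vec{\eta}_{ij})$ is a box and hence the convex hull of its $2^{n_x}$ vertices.

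\emph{Vertex representation.} By \eqref{eq:predicate level set}, $\mathcal{H}_{ij}(\vec{\eta}_{ij})$ is the Cartesian product of the intervals $[\vec{c}_{ij}[k]-\vec{\alpha}_{ij}[k],\,\vec{c}_{ij}[k]+\vec{\alpha}_{ij}[k]]$. Its extreme points are therefore the points obtained by choosing, in each coordinate, one endpoint, indexed by sign vectors $\vec{s}_l\in\{-1,1\}^{n_x}$.

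There is a scaling mismatch to resolve here. With the half-width convention of \eqref{eq:predicate level set}, the vertices are $\vec{c}_{ij}+\vec{\alpha}_{ij}\odot\vec{s}_l$. The factor $\tfrac12$ in \eqref{eq:vertex equation} instead corresponds to reading $\vec{\alpha}_{ij}$ as the full side length. I would state this explicitly and run the argument with $v_l$ equal to the true vertices of the box; the reasoning is unchanged under either convention.

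Next I would show $\mathcal{H}_{ij}(\vec{\eta}_{ij})=\mathrm{conv}\{v_l(\vec{\eta}_{ij})\}_{l=1}^{2^{n_x}}$. I would argue coordinatewise by induction on $n_x$. Each scalar coordinate $\vec{e}_{ij}[k]$ is a convex combination $\lambda_k(\text{upper endpoint})+(1-\lambda_k)(\text{lower endpoint})$. Taking the product weights $\prod_k \lambda_k^{(1+\vec{s}_l[k])/2}(1-\lambda_k)^{(1-\vec{s}_l[k])/2}$ on vertex $v_l$ gives nonnegative weights that sum to one and reproduce $\vec{e}_{ij}$. The reverse inclusion is immediate because the box is convex.

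\emph{Necessity.} If $\mathcal{H}_{ij}(\vec{\eta}_{ij})\subseteq\mathcal{C}_{ij}$, then every vertex $v_l(\vec{\eta}_{ij})$ lies in $\mathcal{H}_{ij}(\vec{\eta}_{ij})$. Hence $g_k(v_l(\vec{\eta}_{ij}))\le 0$ for all $k,l$, which is \eqref{eq:equivalence convex inclusion}.

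\emph{Sufficiency.} Take any $\vec{e}_{ij}\in\mathcal{H}_{ij}(\vec{\eta}_{ij})$ and write it as $\sum_l\beta_l v_l(\vec{\eta}_{ij})$ with $\beta_l\ge0$ and $\sum_l\beta_l=1$. Jensen's inequality for each convex $g_k$ gives
\[
g_k(\vec{e}_{ij})\le\sum_l\beta_l\, g_k(v_l(\vec{\eta}_{ij}))\le 0 .
\]
Hence $\vec{e}_{ij}\in\mathcal{C}_{ij}$.

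I would close with a remark on convexity in $\vec{\eta}_{ij}$, since that is how the result is used in \eqref{eq:optimizaiton final}. Each $v_l$ is affine in $\vec{\eta}_{ij}$, so each $g_k\circ v_l$ is convex in $\vec{\eta}_{ij}$, and the conditions \eqref{eq:equivalence convex inclusion} are convex constraints.

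The only step requiring care is the convex-hull representation of the box, which is routine. The main obstacle is really just the $\tfrac12$ convention, which should be reconciled with \eqref{eq:predicate level set} before the argument is written out.
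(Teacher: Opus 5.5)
Your argument is correct and follows the paper's own route: write the box as the convex hull of its $2^{n_x}$ vertices, get necessity because the vertices lie in $\mathcal{H}_{ij}(\vec{\eta}_{ij})\subseteq\mathcal{C}_{ij}$, and get sufficiency from Jensen's inequality for each convex $g_k$. You are also right about the factor $\tfrac12$: under the half-width convention of \eqref{eq:predicate level set}, the points in \eqref{eq:vertex equation} are the vertices of a box of half the size, so the ``if'' direction as literally stated fails (e.g.\ $n_x=1$, $\vec{c}_{ij}=0$, $\vec{\alpha}_{ij}=2$, $\mathcal{C}_{ij}=[-1,1]$), and the paper's sketch silently needs $v_l(\vec{\eta}_{ij})=\vec{c}_{ij}+\vec{\alpha}_{ij}\odot\vec{s}_l$, exactly as you propose.
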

\begin{proof}(Sketch)
By definition, the set $\mathcal{H}_{ij}(\vec{\eta}_{ij})$ can be written as the \textit{convex hull} \cite[Sec. 2.2.4]{boyd2004convex} of a set of $n_v = 2^{n_x}$ \textit{vertices} $v_l(\vec{\eta}_{ij})$ as $\mathcal{H}_{ij}(\vec{\eta}_{ij}) = \{\sum_{l=1}^{n_v} \lambda_l v_l(\vec{\eta}_{ij}) \mid\sum_{l=1}^{n_v} \lambda_l = 1, \lambda_l \geq 0  \}$, where each vertex $v_l, l =1, \ldots 2^{n_x},$ is a linear function of $\vec{\eta}_{ij}$ as per \eqref{eq:vertex equation}. By  Jensen's inequality, \cite[Sec. 3.1.8]{boyd2004convex}, we can then conclude that $\mathcal{H}_{ij}(\vec{\eta}_{ij}) \subseteq \mathcal{C}_{ij}$ if and only if each vertex $\vec{v}_l(\vec{\eta}_{ij})$ satisfies the inequality $g_k(v_l(\vec{\eta}_{ij})) \leq 0\; \forall k = 1, \ldots n_g$, which is exactly \eqref{eq:equivalence convex inclusion}. 
\end{proof}
\begin{corollary}\label{cor:stupic corollary}
Given relative configurations $\mathcal{H}_{ij}(\vec{\eta}_{ij})$ and $\mathcal{H}'_{ij}(\vec{\eta}'_{ij})$, then $\mathcal{H}_{ij}(\vec{\eta}_{ij}) \subseteq \mathcal{H}'_{ij}(\vec{\eta}'_{ij})$ if an only if $g_{k}(v_l(\vec{\eta}_{ij})) := \vec{h}_k^T(v_l(\vec{\eta}_{ij}) - \vec{c}'_{ij}) - \vec{m}_k^T\vec{\alpha}_{ij}' \leq 0,
           \; \forall l=1,\ldots 2^{n_x}\;,
           \forall k = 1, \ldots 2n_x$, where $\vec{h}_k$ and $\vec{m}_k$ are the $k$-th row of the matrices $H = [\text{Id}_{n_x}, -\text{Id}_{n_x}]^T$, $M = [\text{Id}_{n_x},\text{Id}_{n_x}]^T$.
\end{corollary}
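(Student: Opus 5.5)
The plan is to obtain the corollary as a direct specialization of Proposition~\ref{prop:first prop}, taking the convex set $\mathcal{C}_{ij}$ there to be the hyper-rectangle $\mathcal{H}'_{ij}(\vec{\eta}'_{ij})$ itself. By \eqref{eq:predicate level set},
\[
\mathcal{H}'_{ij}(\vec{\eta}'_{ij}) = \{\vec{e}_{ij} \mid H(\vec{e}_{ij}-\vec{c}'_{ij}) - M\vec{\alpha}'_{ij} \leq \vec{0}\},
\]
so it has exactly the form $\{\vec{e}_{ij}\mid g_k(\vec{e}_{ij})\leq 0,\ k=1,\ldots,n_g\}$ with $n_g = 2n_x$. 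Here $g_k(\vec{e}_{ij}) := \vec{h}_k^T(\vec{e}_{ij}-\vec{c}'_{ij}) - \vec{m}_k^T\vec{\alpha}'_{ij}$, where $\vec{h}_k$ and $\vec{m}_k$ are the $k$-th rows of $H$ and $M$. The first step is therefore only to record this row-by-row rewriting of the matrix inequality.

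The second step is to check the hypothesis of Proposition~\ref{prop:first prop}. Each $g_k$ is affine in $\vec{e}_{ij}$, since $\vec{\eta}'_{ij}$ is fixed data, and is therefore convex. Proposition~\ref{prop:first prop} then gives that $\mathcal{H}_{ij}(\vec{\eta}_{ij}) \subseteq \mathcal{H}'_{ij}(\vec{\eta}'_{ij})$ holds if and only if $g_k(v_l(\vec{\eta}_{ij}))\leq 0$ for all $l=1,\ldots,2^{n_x}$ and $k=1,\ldots,2n_x$. Substituting the explicit form of $g_k$ yields precisely the stated inequalities.

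Finally, I would observe why these are \emph{linear} in the decision parameters. Each $v_l$ is linear (affine) in $\vec{\eta}_{ij}$ by \eqref{eq:vertex equation}, and $g_k$ is affine in both its argument and $\vec{\eta}'_{ij}$. Hence each composite constraint is affine jointly in $(\vec{\eta}_{ij},\vec{\eta}'_{ij})$. This is what the overloading constraint in Sec.~\ref{sec:task rewriting} requires, where both sets may be parametric.

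The only point needing care, and the expected obstacle, is that the vertex map must correctly represent $\mathcal{H}_{ij}(\vec{\eta}_{ij})$ as a convex hull, since Proposition~\ref{prop:first prop} relies on this. Under \eqref{eq:predicate level set}, $\vec{\alpha}_{ij}$ acts as a half-width, so the vertices are $\vec{c}_{ij}+\vec{\alpha}_{ij}\odot\vec{s}_l$. I would either use this normalization or note that the $\tfrac{1}{2}$ scaling in \eqref{eq:vertex equation} corresponds to reading $\vec{\alpha}_{ij}$ as a full side length. Either convention leaves the vertex-enumeration argument and the linearity conclusion unchanged.
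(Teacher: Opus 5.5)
Your proposal is correct and follows the paper's proof: the paper also writes $\mathcal{H}'_{ij}(\vec{\eta}'_{ij})$ row by row as $\{\vec{e}_{ij} \mid \vec{h}_k^T(\vec{e}_{ij}-\vec{c}'_{ij}) - \vec{m}_k^T\vec{\alpha}'_{ij}\leq 0,\ k=1,\ldots,2n_x\}$, observes that each $g_k$ is affine and hence convex, and invokes Proposition~\ref{prop:first prop}. Your caution about the vertex map is justified: since $\vec{\alpha}_{ij}$ is a half-width in \eqref{eq:predicate level set}, the vertices must be $\vec{c}_{ij}+\vec{\alpha}_{ij}\odot\vec{s}_l$, because with the literal $\tfrac{1}{2}$ factor the vertex conditions would only certify inclusion of a half-size box, and the ``if'' direction would fail.
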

\begin{proof}
Since $\mathcal{H}'_{ij}(\vec{\eta}'_{ij}) = \{\vec{e}_{ij} \mid g_k(\vec{e}_{ij})= \vec{h}_k^T (\vec{e}_{ij} - \vec{c}'_{ij}) - \vec{m}_k^T\vec{\alpha}_{ij}' \leq 0, \forall k = 1, \ldots 2n_x\}$, where $g_k$ is linear in $\vec{e}_{ij}$, and thus convex, then the result follows by Prop. \ref{prop:first prop}. 
\end{proof}

\begin{proposition}\label{prop:minkowski inclusion appendix}
Let a graph $\mathcal G$, and a path $\vec{\pi}_{ij}$ in $\mathcal{G}$, where $\vec{e}_{rs} \in \mathcal{H}_{rs}(\vec{\eta}_{rs}),\; \forall (r,s)\in \epsilon(\vec{\pi}_{ij})$, for some relative configuration $\mathcal{H}_{rs}(\vec{\eta}_{rs})$. Let the stacked vector $\bar{\vec{\eta}} = [\vec{\eta}_{rs}]_{(r,s) \in \epsilon(\vec{\pi}_{ij})}$  and a relative configuration $\mathcal{H}_{ij}(\vec{\eta}_{ij})$. Then, the inclusion condition $\bigoplus_{(r,s) \in \epsilon(\vec{\pi}_{ij})} \mathcal{H}_{rs}(\vec{\eta}_{rs}) \subseteq \mathcal{H}_{ij}(\vec{\eta}_{ij})$ is equivalent to a set of linear inequalities in $\bar{\vec{\eta}}$. 
\end{proposition}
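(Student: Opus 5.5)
The plan is to reduce the Minkowski inclusion to a single inclusion between two hyper-rectangles, and then apply Cor.~\ref{cor:stupic corollary}.

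\textbf{Step 1 (collapse the sum).} By the closure property \eqref{eq:minkowsky inclusion to linear constraints} (i.e., \cite[Prop. 3]{me}), the left-hand side equals a single hyper-rectangle:
\[
\bigoplus_{(r,s)\in\epsilon(\vec{\pi}_{ij})}\mathcal{H}_{rs}(\vec{\eta}_{rs})=\mathcal{H}_{ij}(\hat{\vec{\eta}}),\qquad \hat{\vec{\eta}}:=\sum_{(r,s)\in\epsilon(\vec{\pi}_{ij})}\vec{\eta}_{rs}=P\bar{\vec{\eta}}.
\]
Here $P=[\text{Id}_{2n_x},\ldots,\text{Id}_{2n_x}]$ is a constant matrix. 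The center of this rectangle is $\hat{\vec{c}}=\sum\vec{c}_{rs}$ and its size vector is $\hat{\vec{\alpha}}=\sum\vec{\alpha}_{rs}$. Both are linear in $\bar{\vec{\eta}}$, and $\hat{\vec{\alpha}}\geq\vec{0}$ holds automatically. The inclusion to be characterized is therefore $\mathcal{H}_{ij}(\hat{\vec{\eta}})\subseteq\mathcal{H}_{ij}(\vec{\eta}_{ij})$.

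\textbf{Step 2 (apply the vertex characterization).} By Cor.~\ref{cor:stupic corollary}, this inclusion holds if and only if
\[
\vec{h}_k^T\big(v_l(\hat{\vec{\eta}})-\vec{c}_{ij}\big)-\vec{m}_k^T\vec{\alpha}_{ij}\leq 0
\]
for all $l=1,\ldots,2^{n_x}$ and $k=1,\ldots,2n_x$. By \eqref{eq:vertex equation}, each vertex is
\[
v_l(\hat{\vec{\eta}})=\hat{\vec{c}}+\tfrac{1}{2}\hat{\vec{\alpha}}\odot\vec{s}_l .
\]
Since $\vec{s}_l$ is a fixed sign vector, the Hadamard product is a fixed linear map, namely $\operatorname{diag}(\vec{s}_l)\hat{\vec{\alpha}}$. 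So each $v_l$ is linear in $\hat{\vec{\eta}}$, and hence linear in $\bar{\vec{\eta}}$ through $P$.

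Each constraint then takes the form $a_{kl}^T\bar{\vec{\eta}}\leq b_{kl}$. The vector $a_{kl}$ depends only on $\vec{h}_k$, $\vec{s}_l$ and $P$, while $b_{kl}=\vec{h}_k^T\vec{c}_{ij}+\vec{m}_k^T\vec{\alpha}_{ij}$ is a constant (or is affine if $\vec{\eta}_{ij}$ is also a decision variable). This gives $2n_x2^{n_x}$ linear inequalities in $\bar{\vec{\eta}}$, which proves the claim.

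\textbf{Possible refinement.} The vertex count can be reduced by noting that the $k$-th row only needs the maximizing vertex. This yields the $2n_x$ inequalities $\pm(\hat{\vec{c}}-\vec{c}_{ij})[k]+\tfrac12\hat{\vec{\alpha}}[k]\leq\tfrac12\vec{\alpha}_{ij}[k]$.

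\textbf{Main obstacle.} The only delicate point is consistency of the size convention. The set definition \eqref{eq:predicate level set} uses half-width $\vec{\alpha}$, while the vertex formula \eqref{eq:vertex equation} uses $\vec{\alpha}/2$. I will fix one convention throughout (rescaling $\vec{\alpha}$ if needed). Either way the constraints remain linear, so the conclusion is unaffected.

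Step 1 also requires the Minkowski-sum identity \eqref{eq:minkowsky inclusion to linear constraints} to hold as an equality. I take it from \cite[Prop. 3]{me}. It can also be checked directly: the Minkowski sum of axis-aligned boxes is computed coordinatewise, and coordinatewise the sum of intervals $[c_1-\alpha_1,c_1+\alpha_1]+[c_2-\alpha_2,c_2+\alpha_2]$ is $[c_1+c_2-(\alpha_1+\alpha_2),\,c_1+c_2+\alpha_1+\alpha_2]$.
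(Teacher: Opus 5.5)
Your proposal is correct and follows the paper's proof essentially step for step. Like the paper, you collapse the Minkowski sum via \eqref{eq:minkowsky inclusion to linear constraints} to the single box $\mathcal{H}_{ij}(\sum \vec{\eta}_{rs})$, apply Cor.~\ref{cor:stupic corollary}, and observe that each vertex $v_l$ composed with the summation map is linear in $\bar{\vec{\eta}}$. Your remark about the $\vec{\alpha}$ versus $\vec{\alpha}/2$ mismatch between \eqref{eq:predicate level set} and \eqref{eq:vertex equation} is a genuine inconsistency in the paper: with mismatched conventions the vertex test would be only necessary, not sufficient. Fixing one convention throughout, as you propose, is the right remedy and does not affect linearity.
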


\begin{proof}(Sketch)
By \eqref{eq:minkowsky inclusion to linear constraints},  $\bigoplus_{(r,s) \in \epsilon(\vec{\pi}_{ij})} \mathcal{H}_{rs}(\vec{\eta}_{rs}) = \mathcal{H}_{ij}\left(\sum_{(r,s) \in \epsilon(\vec{\pi}_{ij})} \vec{\eta}_{rs}\right) = \mathcal{H}_{ij}(\vec{\eta}^{\oplus}_{ij}(\bar{\vec{\eta}}))$, where $\vec{\eta}^{\oplus}_{ij}(\bar{\vec{\eta}}) := \sum_{(r,s) \in \epsilon(\vec{\pi}_{ij})} \vec{\eta}_{rs}$ is a linear in each $\vec{\eta}_{rs}$. By virtue of Corollary \ref{cor:stupic corollary}, we then have that $\mathcal{H}_{ij}(\vec{\eta}^{\oplus}_{ij}) \subseteq \mathcal{H}_{ij}(\vec{\eta}_{ij}) \Leftrightarrow g_{k}(\bar{\vec{\eta}}) = \vec{h}_k^T(v_l(\vec{\eta}^{\oplus}_{ij}(\bar{\vec{\eta}})) - \vec{c}_{ij}) - \vec{m}_k^T\vec{\alpha}_{ij} \leq 0,
\; \forall l=1,\ldots 2^{n_x}\;,\forall k = 1, \ldots 2n_x$. Since the function $v_l(\vec{\eta}^{\oplus}_{ij}(\bar{\vec{\eta}}))$ is the composition of linear function $v_{l}(\cdot)$ with the linear function $\vec{\eta}^{\oplus}_{ij}(\cdot)$, then the inequalities are linear in the variables $\bar{\vec{\eta}}$. 
\end{proof}